\newcommand{\set}[1]{\ensuremath{\mathcal #1}}
\newcommand{\separator}{
  \begin{center}
    \rule{\columnwidth}{0.3mm}
  \end{center}
}
\newtheorem{proposition}{Proposition}[section]
\newtheorem{lemma}{Lemma}[section]
\newtheorem{definition}{Definition}[section]
\newcommand{\beq}{\begin{eqnarray*}}
\newcommand{\eeq}{\end{eqnarray*}}
\newcommand{\beqn}{\begin{eqnarray}}
\newcommand{\eeqn}{\end{eqnarray}}
\newcommand{\bemn}{\begin{multiline}}
\newcommand{\eemn}{\end{multiline}}
\def\A{\mathcal A}
\newtheorem{theorem}{Theorem}[section]
\DeclareFixedFont{\BX}{U}{cmss}{bx}{n}{10}
\patchcmd{\maketitle}{\@copyrightspace}{}{}{}
\def\BibTeX{{\rm B\kern-.05em{\sc i\kern-.025em b}\kern-.08em
    T\kern-.1667em\lower.7ex\hbox{E}\kern-.125emX}}
\begin{document}

\title{Economics of Spot Instance Service:\\A Two-stage Dynamic Game Approach}

\author{\IEEEauthorblockN{Hyojung Lee$^\dag$, Lam Vu$^*$, and Minsung Jang$^\dag$}
  \IEEEauthorblockA{\textit{$^\dag$Cloud Research Team, Samsung SDS}\\
    }
    \IEEEauthorblockA{\textit{$^*$Samsung SDSV}\\
    }
    }


\maketitle

\begin{abstract}
 This paper presents the economic impacts of spot instance service on the cloud service providers (CSPs) and the customers when the CSPs offer it along with the on-demand instance service to the customers. We model the interaction between CSPs and customers as a non-cooperative two-stage dynamic game. Our equilibrium analysis reveals (i) the techno-economic interrelationship between the customers' heterogeneity, resource availability, and CSPs' pricing policy, and (ii) the impacts of the customers' service selection (spot vs. on-demand) and the CSPs' pricing decision on the CSPs' market share and revenue, as well as the customers' utility. The key technical challenges lie in, first, how we capture the strategic interactions between CSPs and customers, and second, how we consider the various practical aspects of cloud services, such as heterogeneity of customers' willingness to pay for the quality of service (QoS) and the fluctuating resource availability. The main contribution of this paper is to provide CSPs and customers with a better understanding of the economic impact caused by a certain price policy for the spot service when the equilibrium price, which from our two-stage dynamic game analysis, is able to set as the baseline price for their spot service.
\end{abstract}


\begin{IEEEkeywords}
Spot instance service, Spot pricing, Equilibrium analysis, Game theory, Cloud economics
\end{IEEEkeywords}


\section{Introduction}
Cloud service providers (CSPs) offer their customers with a highly cost-effective service utilizing CSPs' unused resources. The service, which is usually referred to as spot instance service, is beneficial to both CSPs and customers as the CSPs can exploit those unused resources for a new revenue source while the customers can reduce their cloud cost. Although the customer's spot instances can be terminated at any time by CSPs' policies, the spot service has been growing in popularity because it~\cite{AWSSpot, AzureSpot, IBMSpot, GCPSpot} is offered much less price than their regular on-demand service.



It, however, has not been well-studied the economic impacts of spot service from both service providers' and customers' perspective. This paper argues that CSPs and customers must understand their strategic interactions so that provider's pricing policy for individual services, spot and on-demand, is set accordingly in the cloud service market. 
As mentioned, an on-demand instance will be available until its customer terminates it, which comes at a higher price while spot will not offer that level of service availability, but be much cheaper. Under these circumstances, the customers have to choose which service will maximize their utility, and the CSPs should set their pricing policies for each service to increase revenue. For a mutually beneficial cloud service market of all participants, it is important to understand the strategic interaction between CSPs' pricing policies for on-demand vs. spot services and customers' service selection.


Predicting how a certain pricing policy for the spot service impacts CSP's total revenue is challenging because on the one hand, the spot service could deliver a positive impact extending their customer base, but on the other hand, it might negatively impact the revenue due to a cheaper price and possible cannibalization of its pricey counterpart, on-demand service~\cite{arrow1954existence}. 
In addition, the availability of resources for the spot service is hard to predict as those become available only after all the on-demand instances do not fully use them. Therefore, CSPs providing spot such as AWS~\cite{AWSSpot}, Google Cloud~\cite{GCPSpot}, IBM Cloud~\cite{IBMSpot}, and Azure~\cite{AzureSpot} do not set their spot prices as a fixed value, but dynamically determine it according to the availability of the resources and the  customers' demand~\cite{AWSSpot}. Due to the nature of resources being limited and fluctuating, these CSPs allow the customers to set a maximum bid similar to an auction when they request spot instances so that it is possible for CSPs to reclaim their resources (i.e., terminating spot instances currently serving customers) if the spot price set by a CSP goes up above the bid.

To clearly understand the economic impacts of the relationship between a pricing policy of CSP for the instance services (spot and on-demand) and a service selection of customers in the cloud service market, this paper formally models the interaction as a non-cooperative two-stage dynamic game, which is a well-known game-theoretic approach~\cite{von2007theory}. The model captures the strategic behaviors of CSPs and customers and how we consider the various practical aspects of cloud services, such as the heterogeneity of customers' willingness to pay with respect to the quality of service (i.e., the availability of their instances) and the unpredictable amount of unused resources eventually available for spot in a data center. By doing so, we aim to answer the following key question: how much and under what conditions the spot instance service affects the economic benefits of CSPs and customers. The main contributions of this paper are as follows:

\begin{compactitem}[$\circ$]
\item We model a two-stage dynamic game to capture the interaction between CSPs and customers in the cloud service market. The game-theoretic analysis results in a closed-form unique equilibrium from which the practical implications of the spot service are derived, e.g., the pricing policy of CSPs and the service selection of customers not only for maximizing profits but also for the stable cloud service market.
\item We quantify the economic benefits of spot instance service on both CSPs and customers, for which we compare the equilibrium analysis results, analytically and numerically, between the case of serving spot instances and that of not serving them, i.e., only on-demand instances exist. Therefore, we show under what condition of the resource utilization and the QoS of each service, how much more profit the spot service can be brought to CSPs and customers.
\item We provide a simple experimental evaluation verifying the profitability of the spot instance service when enough idle resources exist after serving the on-demand instance service for a certain CSP. Furthermore, it is also shown that we need proper provisioning and eviction algorithm for spot instances to avoid interrupting the provisioning of on-demand instances.
\end{compactitem}

\section{Model}

This paper assumes a market where a cloud service provider (CSP) serving two types of instance services, spot and on-demand, and customers can choose which service to use. 









\subsection{System Model}\label{subsec:model}


\smallskip
\noindent{\bf \em Cloud Service Provider (CSP).}
We consider a single cloud service provider which provides an on-demand instance service and a spot instance service to customers. 
There is a set $\set M = \{1, 2, \dots, M\}$ of $M$ computing server clusters, and we index the clusters by $m$. Note that the prices of the on-demand instance service and the spot instance service can vary across the clusters. Let $c^m$ be the maximum capacity of computing resource in cluster $m \in \set M.$ In the real-world cloud market, a variety of computing resource services are offered by multiple CSPs. This paper, however, primarily focuses on analyzing the relationship between spot instance service and on-demand instance service, which have the most considerable impact on spot instance service. Moreover, our model and analysis results can be linearly extended for cases involving multiple CSPs.



\smallskip
\noindent{\bf \em Customers.} There is a set $\set I = \{1, 2, \dots, I\}$ of $I$ customers spread in the whole logical regions (or simply regions throughout this paper) that the entire clusters are covered. We assume that those regions are disjoint, thus we use $i \in \set I^m$ to index a particular customer under the region covered by cluster $m \in \set M.$ This assumption implies a customer uses its service not from multiple clusters but from a single cluster. Note that the elements of subset $\set I^m \subseteq \set I$ for all $m \in \set M$ can be varied according to the customers' willingness to choose a cluster or a service provisioning/migration policy of CSP. Each customer $i \in \set I$ has its computing resource demand $x_i(t)$ at time slot $t \in \set T$ where $\set T = \{1, 2, \dots, T\}.$ We consider an arbitrary time scale for an average analysis over the unit billing cycle.

\smallskip
\noindent{\bf \em On-demand and spot instance services.}
Each customer can have three available options for choosing a service: (i) on-demand instance service, (ii) spot instance service, and (iii) no service. Note that the customers can choose not to use any service if there is no one satisfying their willingness to pay. Denote $\set A$ be the set of services, $\set A = \{o, s, n\},$ where $o, s,$ and $n$ refer to the first character of the services, respectively. 
Thus, a customer pays no money if it chooses no service, i.e., $p_n = 0$. Otherwise, for customers of cluster subset $\set I^m$ at time slot $t \in \set T$, the CSP sets the service prices $p_o^m (t)$ and $p_s^m (t)$. 
Furthermore, each customer $i$ chooses its service at each time slot $t,$ i.e., $a_i(t) \in \set A.$
Similar to notations of prices, $q_o^m (t)$ and $q_s^m(t)$ denote the quality of services (QoSes) of on-demand and spot instance services. Based on service level agreements (SLAs), we assume that the on-demand instance service always guarantees a higher QoS than the spot instance service, i.e., $q_o^m (t) > q_s^m (t),$ for a cluster $m$ at time slot $t.$

\smallskip
\noindent{\bf \em Resource utilization.}
Let $\set I^m_o$ and $\set I^m_s$ be the sets of customers within a subset $\set I^m$ choosing on-demand instance service and spot instance service, respectively. Then, at time slot $t,$ the aggregate resource utilization by on-demand instance service is defined as $x_o^m(t),$ i.e., $x_o^m(t) = \sum_{i \in \set I^m_o} x_i(t).$  Similarly, $x_s^m(t)$ denotes the resource utilization by spot instance service at time slot $t$, i.e., $x_s^m(t) = \sum_{i \in \set I^m_s} x_i(t).$
We assume that the sum of aggregate resource utilization of on-demand and spot instance services is less than or equal to the maximum capacity of cluster $m,$ i.e., $x_o^m (t) + x_s^m (t) \leq c^m.$
This assumption is reasonable because the purpose of CSP offering the spot instance service is to utilize the idle resource of the cluster running the on-demand service.

\subsection{Game Formulation}\label{subsec:game}

A game-theoretic analysis aims at finding equilibrium as a solution, which is a {\em stable} point from which no player
has an incentive to deviate~\cite{nash1951non}.
In other words, at the equilibrium, both the CSPs and the customers do not need to change their strategies (pricing and service selection) to increase their revenues or utilities.
In this section, we develop a game-theoretic model to analyze the impacts of spot instance service on the computing resource market at the equilibrium where we have
the following key question: how much and under what conditions the spot instance service affects the economic benefits of CSP and customers.

\smallskip
\noindent{\bf \em Background of two-stage dynamic game.}
A two-stage dynamic game is a well-known approach for analyzing a staged strategic interaction between two players, which is useful for modeling a situation where the player (i.e., leader) decides its action at the first stage based on the knowledge of the other player (i.e., follower), and subsequently, the follower also chooses its strategy at the second stage~\cite{von2007theory}.
In this paper, we assume that the CSP knows the characteristics of customers' behavior (e.g., a distribution on customers' willingness to pay on the QoS of computing services), which enables us to develop the two-stage dynamic game-theoretic model. At the first stage, the CSP decides how to determine its price strategy for each service, and at the second stage, customers choose which service to subscribe to. 
Note that the equilibrium analysis of this game is composed of two consecutive steps, referred to as {\em backward induction}~\cite{von2007theory}, which will be explained in Section~\ref{sec:analysis}.

\smallskip
\noindent{\bf \em Utility of customers.}
Our model reflects two key attributes of customers, which are how much they can afford for a certain level of quality of service and how much computing resource they demand. 
We introduce the utility function of customer $i$ under the cluster region $\set I^m$ at a time slot $t \in \set T.$ For a given service price $p_o^m(t)$ or $p_s^m(t)$, the strategy of customer $i$ is a service to use $a_i(t) \in \set A$: 
\begin{eqnarray}\label{eq:utility}
u_i(a_i(t)) &=&
\begin{cases}
\theta_i q_o^m(t) x_i(t) - p_o^m(t) x_i(t),&\text{\small if $a_i(t)= o$,}\\
\theta_i q_s^m(t) x_i(t) - p_s^m(t) x_i(t),&\text{\small if $a_i(t)= s$,}\\
0,&\text{\small if $a_i(t) = n$,}
\end{cases}
\end{eqnarray}
where $\theta_i$ denotes a parameter of customer $i$'s willingness to pay for a service providing a certain level of QoS, e.g., $q_o^m(t)$ and $q_s^m(t).$ The parameter $\theta$ is assumed to be a random variable having a value within $[0,1]$, which follows a uniform distribution as popularly modeled in e.g.,~\cite{refAdoption}.  For example, if a customer has a lower willingness to pay, e.g., $\theta \rightarrow 0,$ then it might hardly get a positive utility from any kind of service, regardless of the service prices. Or, if a customer has a higher willingness to pay, e.g., $\theta \rightarrow 1,$ then it might have the incentive to use a high-end service and pay for it. Otherwise, if a customer has a mid-range willingness to pay, e.g., $\theta = 1/2,$ then its preference for service is hard to predict because it highly depends on price.
 Moreover, we note that the utility function is assumed to be non-negative, and it becomes zero if the customer $i$ decides not to use any service, i.e., $u_i(n) = 0$, as the baseline.

\smallskip
\noindent{\bf \em Revenue of cloud service provider.}
We denote by $\pi(\bm p(t))$ the CSP's total revenue\footnote{Note that the price vector $\bm p(t)$ is used for simplicity on notation, i.e., $\bm p(t) = [p_o^m(t), p_s^m(t)]_{m \in \set M}$.} at time slot $t,$ which is earned by the customers paying for on-demand or spot instance service. 
Let $\pi_o^m (p_o^m(t))$ and $\pi_s^m (p_s^m(t))$ be the revenues from the customers subscribing to on-demand and spot instance service in a cluster $m,$ respectively. Then, for an arbitrary cluster $m$ at time slot $t$, the CSP's revenue of on-demand or spot instance service is defined by service price, resource utilization, and market share as follows:
\begin{eqnarray}\label{eq:def_revenue}
\pi_a^m (p_a^m(t))
= p_a^m (t) \cdot \int_0^1 x_{\theta}(t) \cdot  \bm 1_{\{\theta \in \Theta_a^m(t)\}} d \theta,
\end{eqnarray}
for all $a \in \{o, s\},$ where $x_{\theta} (t)  = \sum_{i} x_i(t) \cdot \bm 1_{\{\theta_i = \theta \}}.$
We denote by $\Theta_a^m(t)$ the continuous interval of customers (or the market share for customers) subscribing to service $a \in \{o, s\},$ in cluster $m$ and at time slot $t.$
For example, at time slot $t,$ $\Theta_o^m(t) = [\frac{1}{2}, 1]$ implies that the customers having willingness to pay $\theta_i \in [\frac{1}{2}, 1]$ will choose to use the on-demand instance service.
Then, the total revenue of CSP, i.e.,  $\pi(\bm p(t))$, is given by the summation of all the revenues:
\begin{eqnarray}
\pi(\bm p (t)) = \sum_{m \in \set M} \big( \pi_o^m(p_o^m(t)) + \pi_s^m(p_s^m(t)) \big).
\end{eqnarray}



\smallskip
\noindent{\bf \em Game formulation.}
We formally describe the interaction between CSP and customers through the two-stage dynamic game. Again, we note that the main purpose of the game-theoretic analysis is to find the equilibrium, which is a stable point, from which no player has the incentive to deviate. This equilibrium is formed at the intersection of the best responses of players (i.e., CSP and customers). The best response of a player is defined by a set of strategies (i.e., price or service selection) maximizing profit (i.e., utility or revenue), each of which is regarding the other player's strategy.


\separator
\begin{compactenum}[$\circ$]
\item {\bf \em Stage I: Pricing of cloud service provider.} CSP decides service price vector $\bm p(t)$ for each time slot $t$, to maximize its revenue $\pi(\bm p (t))$:
\begin{align}\label{eq:max_revenue}
\bm p^\star (t) &= \arg \max_{\bm p(t)} \pi(\bm p(t)),\\ \nonumber
&= \arg \max_{\bm p(t)}  \sum_{m \in \set M} \Big(
p_o^m (t) \cdot \int_0^1 x_{\theta}(t) \cdot  \bm 1_{\{\theta \in \Theta_o^m(t)\}} d \theta \\ \nonumber
&+ p_s^m (t) \cdot \int_0^1 x_{\theta}(t) \cdot  \bm 1_{\{\theta \in \Theta_s^m(t)\}} d \theta
\Big ),
\end{align}
which is the best response of CSP regarding the customers' service selection, $\Theta_o^m(t)$ and $\Theta_s^m(t).$
\smallskip
\item {\bf \em Stage II: Customers' service selection.}
A customer under an arbitrary cluster region $m$ chooses one of the services $a_i(t)$ at time slot $t,$ for maximizing its utility $u_i(a_i(t))$:
\begin{align}\label{eq:max_utility}
    a^{\star}_i(t) &= \arg \max_{a_i(t) \in \set A} u_i(a_i(t); p_o^m(t), p_s^m (t)),~ \forall i \in \set I^m,
\end{align}
which is the best response of customer $i \in \set I^m,$ in regard to the CSP's price decision, $p_o^m(t)$ and $p_s^m(t)$.
\end{compactenum}
\separator

\section{Analysis of cloud service market: \\ on-demand and spot Instance Services}\label{sec:analysis}

As described in Section~\ref{subsec:game}, the two-stage dynamic game is formed over the assumption that CSP (i.e., leader) knows the behavior of customers (i.e., follower). To analyze this game, we use a standard tool, called backward induction, which is composed of two consecutive steps and results in the closed form of unique equilibrium.
As the first step, we analyze how customers select where to subscribe in Stage II for a given price vector $\bm p(t).$ It is followed by the second step analysis in Stage I, which is the price decision by CSP, who knows the analysis result of Stage II. 

\smallskip
\noindent{\bf \em Remark for notations.} We consider our game for a given cluster $m \in \set M,$ at a given time slot $t \in \set T,$ so that all the notations such as price, QoS, market share, utility, and revenue depend on $m$ and $t.$ However, for notational simplicity on analysis, we omit the superscription $m$ and the notation $t$, unless they are needed explicitly. For example, we use $p_o$ and $p_s$ instead of $p_o^m(t)$ and $p_s^m(t),$ respectively.

\subsection{Stage II: Customers' Service Selection}\label{subsec:stage2}
A customer $i$ under a region by cluster $m$, i.e., $i \in \set I^m,$ compares all the options of services, i.e., on-demand, spot, and no services, to maximize its utility $u_i.$ The following Lemma~\ref{lem:stage2} states the best response of customer $i$ having willingness to pay $\theta_i$ for a given price vector $\bm p.$
\begin{lemma}\label{lem:stage2}
At a time slot $t,$ a customer $i$ under a region by cluster $m$ selects an instance service $a_i \in \set A$ to maximize its utility as follows: 
\begin{eqnarray}
\label{eq:stage2}
a_i^{\star} = 
\begin{cases}
o, \quad \text{if} \quad \frac{p_o - p_s}{q_o - q_s} < \theta_i \leq 1,\\
s, \quad \text{if} \quad \frac{p_s}{q_s} < \theta_i \leq \frac{p_o - p_s}{q_o - q_s}, \\
n, \quad \text{otherwise.}
\end{cases}
\end{eqnarray}
Then, the market shares $\Theta_o$ and $\Theta_s$ can be defined by:
\begin{eqnarray}\label{eq:prop_marketshare}
\Theta_o = \Big(\frac{p_o- p_s}{q_o - q_s}, 1\Big] \;\; \text{and} \;\; \Theta_s = \Big(\frac{p_s}{q_s}, \frac{p_o-  p_s}{q_o - q_s}\Big].
\end{eqnarray}
Moreover, the spot instance service can have a positive market share if the following condition is met:
\begin{eqnarray}
{ {\tt C0:}} \;\; \frac{p_o}{q_o} > \frac{p_s}{q_s}.
\end{eqnarray}
\end{lemma}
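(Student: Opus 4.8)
The plan is to reduce the three-way service choice to a comparison of three affine functions of $\theta_i$ and then read off the best response from their upper envelope. Since the demand $x_i(t)>0$ multiplies every branch of \eqref{eq:utility}, it does not affect the $\arg\max$ in \eqref{eq:max_utility}, so customer $i$'s decision is governed solely by the per-unit utilities $g_o(\theta_i)=\theta_i q_o-p_o$, $g_s(\theta_i)=\theta_i q_s-p_s$, and the baseline $g_n\equiv 0$. First I would carry out the three pairwise comparisons. Because $q_o>q_s$, the inequality $g_o(\theta_i)>g_s(\theta_i)$ is equivalent to $\theta_i>\frac{p_o-p_s}{q_o-q_s}$; likewise $g_s(\theta_i)>0 \iff \theta_i>\frac{p_s}{q_s}$ and $g_o(\theta_i)>0 \iff \theta_i>\frac{p_o}{q_o}$. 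Denote these three thresholds by $\beta=\frac{p_o-p_s}{q_o-q_s}$, $\alpha=\frac{p_s}{q_s}$, and $\gamma=\frac{p_o}{q_o}$.

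The crux of the argument is to order $\alpha,\beta,\gamma$. Cross-multiplying (all of $q_s$, $q_o$, and $q_o-q_s$ are positive, so no inequalities flip) shows that both $\frac{p_s}{q_s}<\frac{p_o-p_s}{q_o-q_s}$ and $\frac{p_o}{q_o}<\frac{p_o-p_s}{q_o-q_s}$ reduce to the single inequality $p_o q_s>p_s q_o$, i.e. exactly to condition \textbf{C0}. Hence, under \textbf{C0} one obtains the chain $\alpha<\gamma<\beta$. With this ordering the upper envelope of $\{g_o,g_s,0\}$ over $[0,1]$ is: the baseline $0$ on $[0,\alpha]$, the spot line $g_s$ on $(\alpha,\beta]$, and the on-demand line $g_o$ on $(\beta,1]$. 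In particular, on $(\beta,1]$ we automatically have $\theta_i>\beta>\gamma$, so $g_o>0$ there and the no-service option is never preferred. Reading off which branch attains the maximum on each subinterval yields precisely \eqref{eq:stage2}, and collecting the $\theta_i$-intervals on which $o$ and $s$ are chosen gives the market shares \eqref{eq:prop_marketshare}.

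For the final claim, note that $\Theta_s=\big(\frac{p_s}{q_s},\frac{p_o-p_s}{q_o-q_s}\big]$ is a nonempty (positive-length) interval if and only if $\frac{p_s}{q_s}<\frac{p_o-p_s}{q_o-q_s}$, which by the cross-multiplication above is exactly \textbf{C0}; this delivers the positive-market-share statement. When \textbf{C0} fails the same cross-multiplication gives $\beta\le\gamma\le\alpha$, the spot interval is vacuous, and the choice collapses to on-demand versus no-service at the single threshold $\gamma$, so spot is dominated.

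I expect the main obstacle to be the ordering step rather than the pairwise comparisons, which are routine once the signs of the denominators are tracked. Specifically, one must verify that $\frac{p_o}{q_o}$ lies strictly between the spot threshold and the crossover threshold and that this placement is controlled by the single inequality \textbf{C0}, and one must confirm the consistency point that $\theta_i>\beta$ forces $\theta_i>\gamma$ so that on-demand genuinely dominates the baseline throughout the on-demand region. Handling the degenerate regime where \textbf{C0} fails (empty spot interval) is the only case analysis that requires care.
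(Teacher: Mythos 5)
Your proposal is correct and follows essentially the same route as the paper's proof: pairwise comparison of the linear (in $\theta_i$) utilities to obtain the thresholds $\frac{p_s}{q_s}$ and $\frac{p_o-p_s}{q_o-q_s}$, followed by the cross-multiplication showing that the spot interval is nonempty exactly when {\tt C0} holds. Your write-up is in fact slightly more complete than the paper's, since you also place $\frac{p_o}{q_o}$ strictly between the two thresholds and verify that on-demand dominates no-service throughout $\big(\frac{p_o-p_s}{q_o-q_s},1\big]$, a consistency check the paper leaves implicit.
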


\begin{proof}
Recall that a customer decides where to subscribe to maximize its utility. Following the definition of utility~\eqref{eq:utility}, the customers having $\theta_i \in [0,1]$ choose the on-demand instance service rather than the spot instance service if the following inequality holds:
\begin{gather*}
\theta_i q_o x_i - p_o x_i > \theta_i q_s x_i- p_s x_i, \cr
\theta_i > \frac{p_o - p_s}{q_o - q_s}.
\end{gather*}
It implies that the customers satisfying $\theta_i \in (\frac{p_o - p_s}{q_o - q_s}, 1]$ prefer to use the on-demand instance service. 
\begin{compactenum}[(a)]
    \item If the condition $\tt C0$ does not hold, the spot market share goes to zero because all the candidates of spot instance users have higher utilities when they use the on-demand instance service. This can be shown by which the following inequality is always true under the complement of condition $\tt C0$:
    \begin{gather*}
    \frac{p_s}{q_s} \geq \frac{p_o - p_s}{q_o - q_s}, \cr
    p_s q_o - p_s q_s \geq p_o q_s - p_s q_s, \cr
    \therefore p_s / q_s \geq p_o /q_o,
    \end{gather*}
        where the customers with $\theta_i \in (p_s / q_s, 1]$ satisfying $\theta_i p_s  - q_s > 0$ could be the candidate of spot instance users.
    \item Otherwise, the spot market share is positive because the customers having $\theta_i \in (\frac{p_s}{q_s}, \frac{p_o - p_s}{q_o - q_s}]$ will choose the spot instance service. The following inequality shows it is always true under the condition $\tt C0$:
    \begin{gather*}
        \frac{p_o - p_s}{q_o - q_s} > \frac{p_s}{q_s}, \cr
        p_o q_s - p_s q_s > p_s q_o - p_s q_s, \cr
        \therefore p_o / q_o > p_s / q_s.
    \end{gather*}
\end{compactenum}
Therefore, the condition $\tt C0$ should be satisfied for a positive spot market share. Subsequently, the market shares $\Theta_o$ and $\Theta_s$ can be defined as follows:
\begin{align*}
\Theta_o = \Big(\frac{p_o- p_s}{q_o -  q_s}, 1\Big] \;\; \text{and} \;\; \Theta_s = \Big(\frac{p_s}{q_s}, \frac{p_o- p_s}{q_o - q_s}\Big].
\end{align*}
\end{proof}

Lemma~\ref{lem:stage2} describes the service selection of customers having different willingness to pay $\theta_i$ for given prices $p_o$ and $p_s$. It also implies the on-demand instance service providing higher QoS attracts customers having a larger willingness to pay, and the spot instance service can have its market share in the middle range of willingness to pay.
Moreover, the condition $\tt C0$ represents that the unit price per unit quality of spot instance service, i.e., $p_s / q_s,$ should be lower than that of on-demand instance service, i.e., $p_o / q_o,$ for securing a positive amount of spot instance service's market share. 

\subsection{Stage I: Pricing of Cloud Service Provider}\label{subsec:stage1}

The service selection of customers in Lemma~\ref{lem:stage2} defines the market shares $\Theta_o$ and $\Theta_s$ for given prices $p_o$ and $p_s.$ We now study how the CSP decides the service price, which not only satisfies the equilibrium but also maximizes its revenue. Due to the complexity of the revenue function defined in~\eqref{eq:def_revenue}, it is challenging to find the closed-form of equilibrium analytically. Thus, we choose the model's simplicity for analytical tractability by assuming a uniform distribution of customers' willingness to pay $\theta_i$ as we explained in the definition of utility in~\eqref{eq:utility}. 
This assumption allows us to define the average resource utilization of each service in the following.

\smallskip
\begin{definition}\label{def:average resource utilization}
For each service $a \in \{o, s\},$ the average resource utilization of customers in $\Theta_a^m(t)$ is defined by $\gamma_a^m(t)$:
\begin{eqnarray}
    \gamma_a^m(t)  = \frac{\int_0^1 x_{\theta}(t) \cdot \bm 1_{\{\theta \in \Theta_a^m(t)\}} d \theta}{|\Theta_a^m(t)|},
\end{eqnarray}
where we recall $x_{\theta} (t) = \sum_{i} x_i(t) \cdot \bm 1_{\{\theta_i  = \theta\}}.$ Also, $|\Theta_a|$ denotes the length of interval $\Theta_a.$
\end{definition}




Thanks to the above definition, we can show the impact of customers' resource utilization on the CSP's decision explicitly in the following results. Note that we use $\gamma_o$ and $\gamma_s$ rather than $\gamma_o^m(t)$ and $\gamma_s^m(t)$ for notational simplicity.

\begin{lemma}\label{lem:condition}
At a time slot $t,$ a CSP has the unique equilibrium price for each service $a \in \{o, s\}$ in a cluster $m$, if the following condition $\tt C1$ is met:
\begin{eqnarray*}
{\tt C1}: \eta q_s / 2 q_o < \gamma_o < \eta /2,
\end{eqnarray*}
where we define $\eta=\gamma_o + \gamma_s$.     
\end{lemma}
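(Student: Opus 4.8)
The plan is to reduce the CSP's Stage~I revenue to an explicit quadratic function of the two prices and then characterize when it admits a unique interior maximizer with both market shares nonempty. First I would feed the market shares of Lemma~\ref{lem:stage2} and Definition~\ref{def:average resource utilization} into the revenue~\eqref{eq:def_revenue}. Since $\int_0^1 x_\theta \,\bm 1_{\{\theta\in\Theta_a\}}\,d\theta=\gamma_a|\Theta_a|$ with $|\Theta_o|=1-\frac{p_o-p_s}{q_o-q_s}$ and $|\Theta_s|=\frac{p_o-p_s}{q_o-q_s}-\frac{p_s}{q_s}$, the total revenue collapses to
\[
\pi(p_o,p_s)=\gamma_o p_o\Big(1-\tfrac{p_o-p_s}{q_o-q_s}\Big)+\gamma_s p_s\Big(\tfrac{p_o-p_s}{q_o-q_s}-\tfrac{p_s}{q_s}\Big),
\]
which is jointly quadratic in $(p_o,p_s)$; hence the CSP's best response is governed entirely by its gradient and Hessian.

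Second, I would set $\partial_{p_o}\pi=0$ and $\partial_{p_s}\pi=0$. After clearing $q_o-q_s>0$ these give the linear system $2\gamma_o p_o-\eta p_s=\gamma_o(q_o-q_s)$ and $\eta q_s p_o=2\gamma_s q_o p_s$, whose coefficient matrix is, up to sign and scaling, the (constant) Hessian $H$ with entries $H_{11}=-\tfrac{2\gamma_o}{q_o-q_s}$, $H_{12}=H_{21}=\tfrac{\eta}{q_o-q_s}$, $H_{22}=-\tfrac{2\gamma_s q_o}{q_s(q_o-q_s)}$. Because $q_o>q_s$ and $\gamma_o,\gamma_s>0$, the leading entry $H_{11}$ is always negative and
\[
\det H=\frac{4\gamma_o\gamma_s q_o-\eta^2 q_s}{q_s(q_o-q_s)^2}.
\]
Writing $D:=4\gamma_o\gamma_s q_o-\eta^2 q_s$, strict concavity of $\pi$, and therefore a unique global maximizer solving the linear system, is equivalent to $D>0$.

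Third, I would solve the system for the closed form $p_o^\star=\tfrac{2\gamma_o\gamma_s q_o(q_o-q_s)}{D}$ and $p_s^\star=\tfrac{\eta\gamma_o q_s(q_o-q_s)}{D}$, and then demand that this candidate be a genuine equilibrium in which \emph{both} services survive. Positivity of the spot share is exactly condition $\tt C0$ of Lemma~\ref{lem:stage2}; substituting $p_o^\star,p_s^\star$ reduces $\tt C0$ to $\gamma_s>\eta/2$, i.e.\ $\gamma_o<\eta/2$, the upper bound of $\tt C1$. Positivity of the on-demand share, $|\Theta_o|>0$ (equivalently $(p_o^\star-p_s^\star)/(q_o-q_s)<1$), reduces after substitution and cancellation of $\gamma_s$ to $2\gamma_o q_o>\eta q_s$, i.e.\ $\gamma_o>\eta q_s/(2q_o)$, the lower bound of $\tt C1$. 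The two bounds together are precisely $\tt C1$.

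Finally, I would close the loop by verifying that $\tt C1$ itself guarantees the concavity used above, i.e.\ $\tt C1\Rightarrow D>0$. Treating $D$ as a downward parabola in $\gamma_o$ with $\gamma_s=\eta-\gamma_o$, its roots are $\tfrac{\eta}{2}(1\pm\sqrt{1-q_s/q_o})$, and since $\sqrt{1-r}>1-r$ for $r=q_s/q_o\in(0,1)$, the $\tt C1$ interval $(\eta q_s/2q_o,\ \eta/2)$ lies strictly inside the root interval; hence $D>0$ throughout $\tt C1$, the prices $p_o^\star,p_s^\star$ are positive, and the maximizer is unique. I expect this reconciliation to be the \emph{main obstacle}: $\tt C1$ is derived from market-share positivity, whereas uniqueness is actually delivered by concavity $(D>0)$, so one must establish $D>0$ before manipulating the market-share inequalities, whose direction flips with the sign of $D$. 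Keeping the derivation of the lower bound consistent with the sign of $D$ is the delicate part of the argument.
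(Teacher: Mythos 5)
Your proposal is correct, and its core is the same as the paper's: reduce the Stage-I revenue to the quadratic $\pi(p_o,p_s)=\gamma_o p_o\bigl(1-\tfrac{p_o-p_s}{q_o-q_s}\bigr)+\gamma_s p_s\bigl(\tfrac{p_o-p_s}{q_o-q_s}-\tfrac{p_s}{q_s}\bigr)$, show its Hessian is negative definite under ${\tt C1}$, and conclude that the first-order conditions admit a unique solution. You differ in two places. First, where the paper verifies $D:=4\gamma_o\gamma_s q_o-\eta^2 q_s>0$ by simply multiplying the two inequalities of ${\tt C1}$ (in the equivalent form $\gamma_o>\eta q_s/2q_o$ and $\gamma_s>\eta/2$) to get $\gamma_o\gamma_s>\eta^2 q_s/4q_o$, you instead locate the ${\tt C1}$ interval strictly between the roots $\tfrac{\eta}{2}\bigl(1\pm\sqrt{1-q_s/q_o}\bigr)$ of $D$ viewed as a downward quadratic in $\gamma_o$; both arguments are valid, the paper's being a one-line computation. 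Second, you additionally derive ${\tt C1}$ as exactly the pair of conditions making both market shares positive at the candidate optimizer; the paper's proof of this lemma does not do that (the closed-form prices and shares only appear later, in Theorem~\ref{thm:equilibrium}), so this is a genuine bonus that explains where ${\tt C1}$ comes from rather than merely checking that it suffices. Your closing caveat --- that the sign of $D$ must be pinned down before manipulating the market-share inequalities, whose direction flips with that sign --- is exactly the right point of care, and your parabola step resolves it without circularity.
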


\begin{figure*}[t!]
  \subfigbottomskip = 0cm
  \subfigcapmargin = 3pt
  \begin{center}
    \subfigure[\small Equilibrium prices
      \label{fig:price}]{
      \includegraphics[width=0.60\columnwidth]{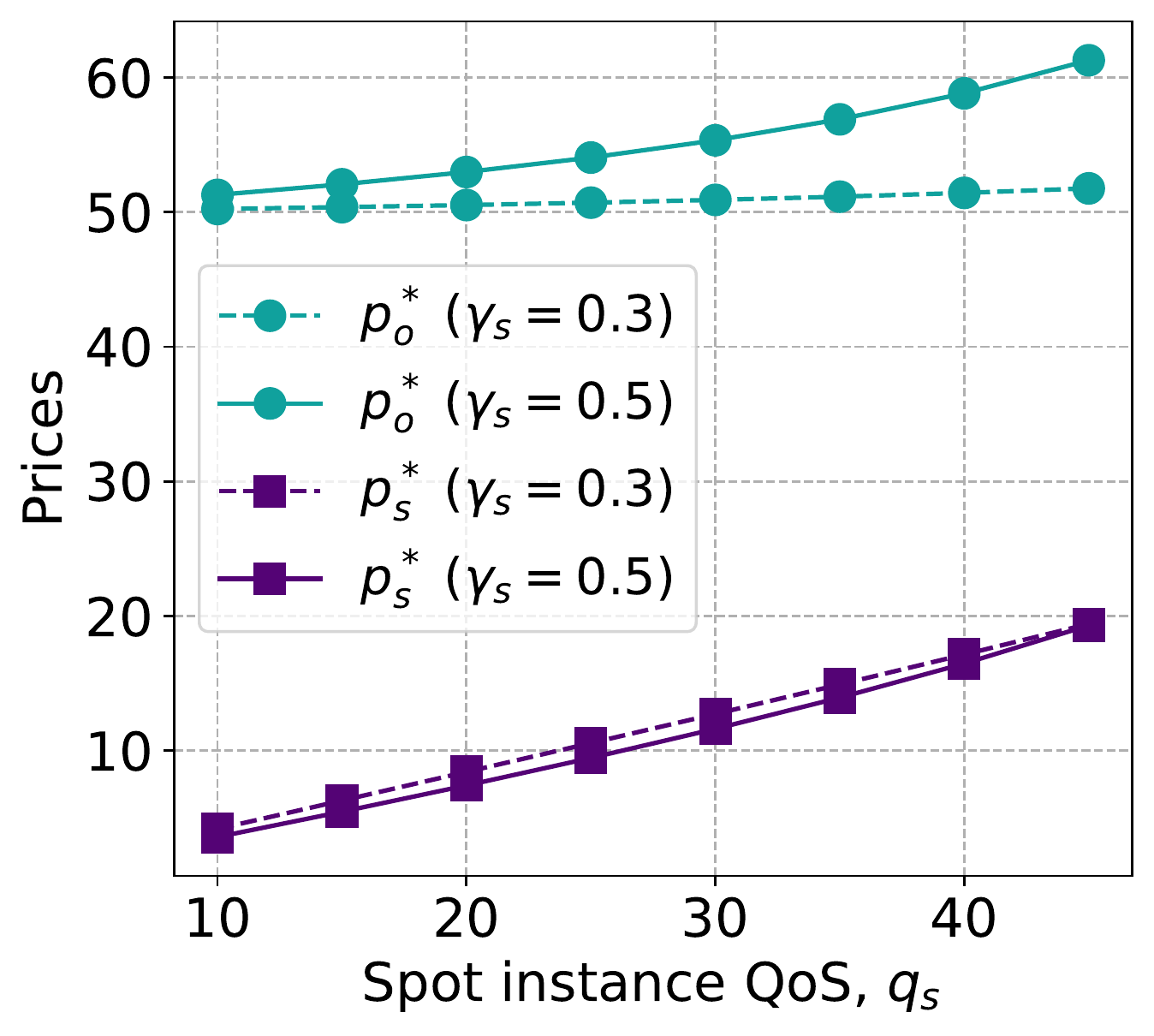}
    }  
    \subfigure[\small Equilibrium market shares
    \label{fig:market_share}]{
      \includegraphics[width=0.60\columnwidth]{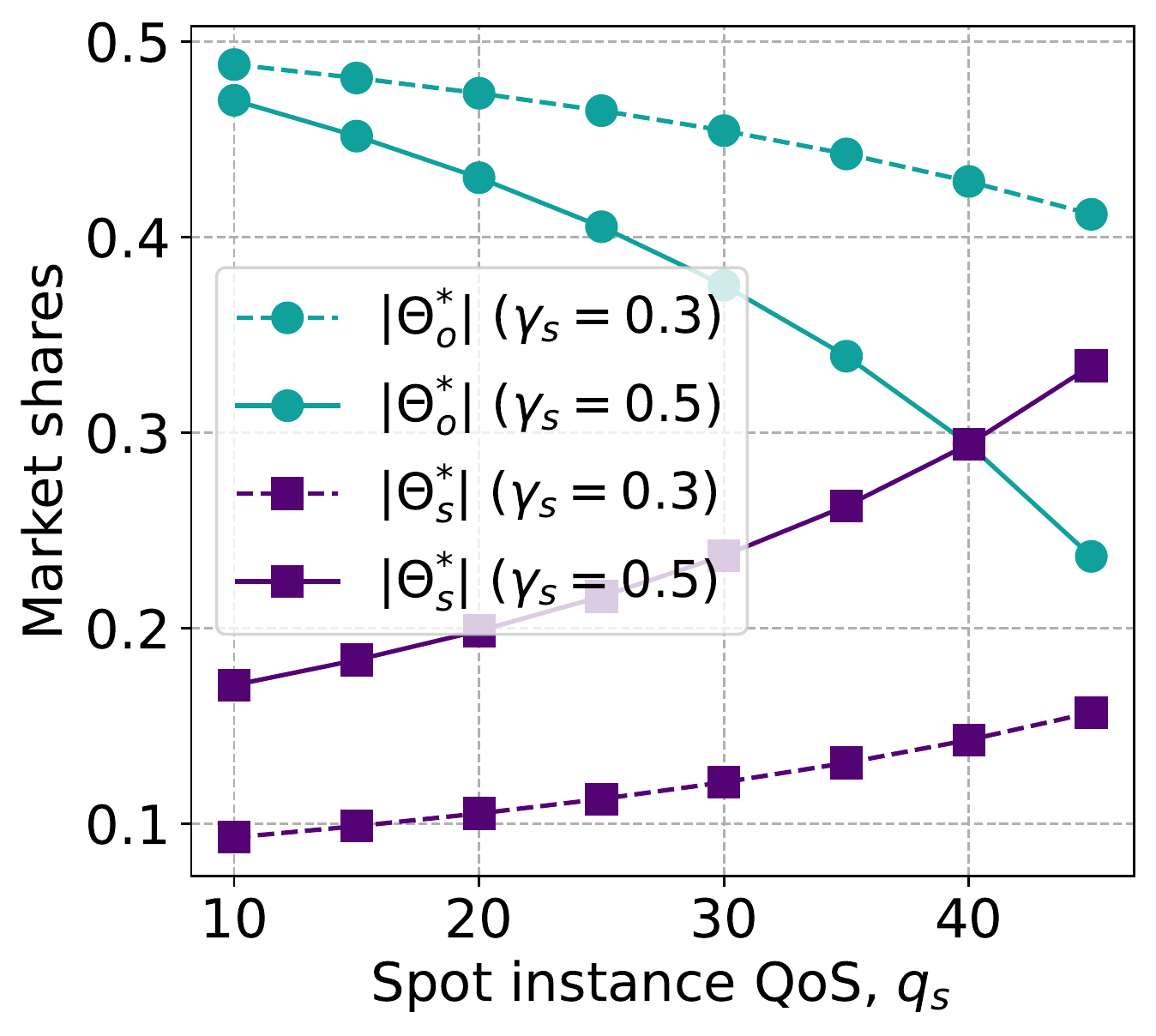}
    }
    \subfigure[\small Equilibrium revenues 
    \label{fig:revenue1}]{
      \includegraphics[width=0.60\columnwidth]{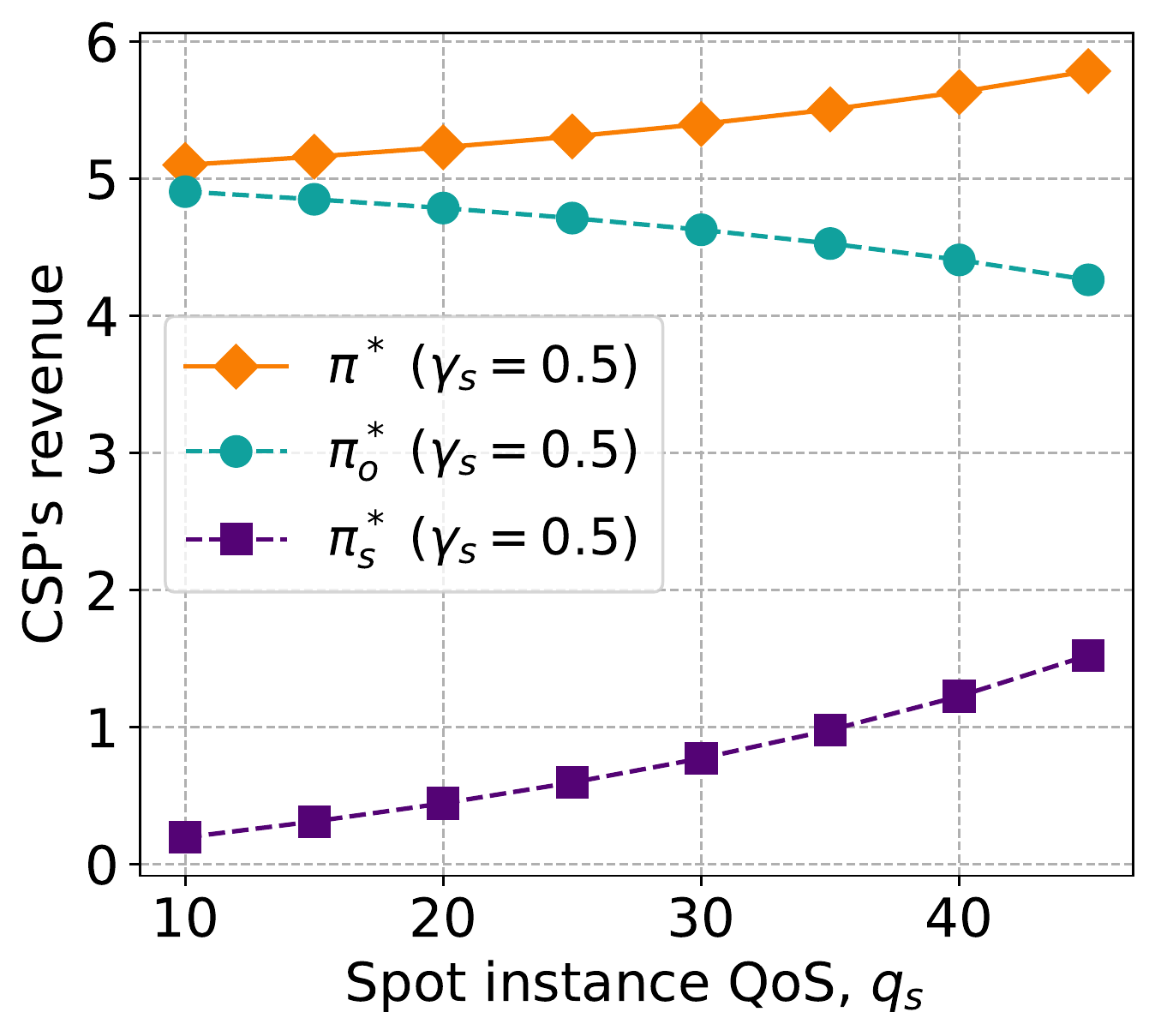}
    } 
  \end{center}
  \caption{Equilibrium prices, market shares, and revenues. We choose $q_o = 100,$ $\gamma_o = 0.2,$ $t = 1,$ and $m = 1$.}
  \label{fig:equilibrium for CSP}
\end{figure*}

\begin{proof}
We now provide a sketch of the proof. The full proof is in Appendix~\ref{proof:lem_condition}.
By backward induction, we get the revenue $\pi(\bm p)$ as the function of pricing strategy $\bm p$ using the market share derived in Lemma~\ref{lem:stage2}. Then, under $\tt C1,$ we show that $\pi (\bm p)$ is a strictly concave function with respect to the price $\bm p$ because the Hessian of $\pi(\bm p)$ is negative definite. Therefore, there exists the unique equilibrium price $\bm p^\star$.
\end{proof}

The condition $\tt C1$ provides useful implications to CSP for understanding the viability of spot instance service. 
Note that the mildness of $\tt C1$ is critical to guarantee the viability of spot instance service because we cannot get the unique equilibrium price of both services if the condition $\tt C1$ does not hold.
The condition $\tt C1$ explicitly describes the range of resource utilization of each instance service.
On the left-hand side of $\tt C1$, the lower bound of $\gamma_o$ goes to zero if the QoS of on-demand instance service $q_o$ is extremely larger than that of spot instance service $q_s.$
It implies that the spot service is easy to profitable if the difference of QoSes between on-demand and spot is large enough.
On the right-hand side of $\tt C1$, the upper bound of $\gamma_o$ is half of the total utilization $\eta$, which means that the resource utilization of on-demand service should be lower than half of capacity, i.e., $c/2,$ to find a profitable $\gamma_s.$

Theorem~\ref{thm:equilibrium} states the equilibrium service price, which is the stable strategic decision of CSP so as to maximize its revenue.
Furthermore, the equilibrium market share of each service and the total revenue of CSP are also derived from the equilibrium price.

\separator

\begin{theorem}\label{thm:equilibrium}
Under the condition $\tt C1$, a CSP has the unique closed-form equilibrium price of each instance service:
\begin{align} \label{eq:eq_price}
    p_o^{\star} = \frac{2\gamma_o \gamma_s q_o (q_o - q_s)}{4\gamma_o \gamma_s q_o - \eta^2 q_s}, ~p_s^{\star} = \frac{\eta \gamma_o q_s (q_o - q_s)}{4\gamma_o \gamma_s q_o - \eta^2 q_s}.
\end{align}
Then, the market share of each service is derived as follows:
\begin{align} \label{eq:eq_marketshare}
|\Theta_o^\star| = \frac{\gamma_s(2\gamma_o q_o - \eta q_s)}{4 \gamma_o \gamma_s q_o - \eta^2 q_s}, ~|\Theta_s^\star| = \frac{\gamma_o(\gamma_s - \gamma_o)q_o}{4\gamma_o \gamma_s q_o - \eta^2 q_s},
\end{align}
which leads to the total revenue of CSP as follows:
\begin{align} \label{eq:eq_revenue}
\pi^{\star} = \frac{\gamma_o^2 \gamma_s q_o(q_o - q_s)}{4\gamma_o \gamma_s q_o - \eta^2 q_s}.
\end{align}
Recall that for notational simplicity, we omitted $t$ and $m,$ e.g, $p_o^{\star}$ and $\Theta_o^{\star}$ refer to ${p_o^m}^\star(t)$ and ${\Theta_o^m}^\star(t),$ respectively. 
\end{theorem}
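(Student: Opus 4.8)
The plan is to carry out the Stage~I optimization by backward induction, exactly as sketched after Lemma~\ref{lem:condition}, and then read off the three closed forms. First I would turn the revenue into an explicit function of the price pair $(p_o, p_s)$. Lemma~\ref{lem:stage2} gives the market-share lengths
$$|\Theta_o| = 1 - \frac{p_o - p_s}{q_o - q_s}, \qquad |\Theta_s| = \frac{p_o - p_s}{q_o - q_s} - \frac{p_s}{q_s},$$
and Definition~\ref{def:average resource utilization} lets me replace the integral $\int_0^1 x_\theta \bm 1_{\{\theta \in \Theta_a\}} d\theta$ appearing in~\eqref{eq:def_revenue} by $\gamma_a |\Theta_a|$. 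Hence the single-cluster revenue is $\pi = p_o \gamma_o |\Theta_o| + p_s \gamma_s |\Theta_s|$, which after using $\tfrac{1}{q_o - q_s} + \tfrac{1}{q_s} = \tfrac{q_o}{q_s(q_o - q_s)}$ and $\eta = \gamma_o + \gamma_s$ becomes the quadratic
$$\pi(p_o, p_s) = \gamma_o p_o - \frac{\gamma_o}{q_o - q_s} p_o^2 + \frac{\eta}{q_o - q_s} p_o p_s - \frac{\gamma_s q_o}{q_s(q_o - q_s)} p_s^2.$$

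Next I would invoke Lemma~\ref{lem:condition}: under $\tt C1$ the Hessian of $\pi$ is negative definite, so $\pi$ is strictly concave and its unique maximizer is its unique stationary point. Setting $\partial \pi / \partial p_o = \partial \pi / \partial p_s = 0$ and clearing $(q_o-q_s)$ produces the linear system $2\gamma_o p_o - \eta p_s = \gamma_o(q_o - q_s)$ and $\eta q_s p_o = 2\gamma_s q_o p_s$. Eliminating $p_o$ yields $p_s^\star$ directly, and substituting back yields $p_o^\star$; both share the denominator $4\gamma_o\gamma_s q_o - \eta^2 q_s$, matching~\eqref{eq:eq_price}.

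For the market shares I would substitute the equilibrium prices into the two length expressions above. The key reductions are $-\eta^2 q_s + \eta\gamma_o q_s = -\eta\gamma_s q_s$ for $|\Theta_o^\star|$ and $2\gamma_s - \eta = \gamma_s - \gamma_o$ for $|\Theta_s^\star|$, which collapse the numerators to the forms in~\eqref{eq:eq_marketshare}. Finally, forming $\pi^\star = p_o^\star \gamma_o |\Theta_o^\star| + p_s^\star \gamma_s |\Theta_s^\star|$ over the common squared denominator, the bracketed numerator $2\gamma_s(2\gamma_o q_o - \eta q_s) + \eta q_s(\gamma_s - \gamma_o)$ simplifies to exactly $4\gamma_o\gamma_s q_o - \eta^2 q_s$, cancelling one power of the denominator and delivering~\eqref{eq:eq_revenue}.

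The routine parts are the quadratic setup and the $2\times 2$ solve; the step I expect to be most delicate is the final revenue cancellation, where the numerator must collapse precisely to the denominator $4\gamma_o\gamma_s q_o - \eta^2 q_s$ for the clean form to emerge. I would also verify, using $\tt C1$ (whose right inequality forces $\gamma_o < \gamma_s$ and whose left inequality gives $2\gamma_o q_o > \eta q_s$, together keeping the denominator positive), that $p_o^\star$, $p_s^\star$, and both market shares are positive and that $\tt C0$ holds at the optimum, so the interval structure underlying Lemma~\ref{lem:stage2} remains self-consistent.
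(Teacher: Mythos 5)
Your proposal is correct and follows essentially the same route as the paper's Appendix~\ref{proof:thm}: express the single-cluster revenue as $\pi = p_o\gamma_o|\Theta_o| + p_s\gamma_s|\Theta_s|$ via Lemma~\ref{lem:stage2} and Definition~\ref{def:average resource utilization}, invoke the strict concavity established in Lemma~\ref{lem:condition}, solve the two first-order conditions (your linear system is exactly the paper's equations~\eqref{eq:p_s} and~\eqref{eq:p_o}), and back-substitute for the market shares and revenue. Your added check that the prices, market shares, and condition $\tt C0$ remain positive and consistent at the optimum is a small bonus the paper leaves implicit, but it does not change the argument.
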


\separator



The proof of Theorem~\ref{thm:equilibrium} is presented in Appendix~\ref{proof:thm}. We obtain the unique closed-form equilibrium, which represents the stable pricing strategy according to rational customers with heterogeneous preferences for QoS at a specific time slot. From the equilibrium analysis, we can derive two types of interpretations. The first one, in Section~\ref{subsec:numerical analysis of equilibrium}, is the numerical analysis to understand the impacts of static equilibrium itself, and the second one, in Section~\ref{sec:evaluation}, is the experimental evaluation demonstrating how the equilibrium-based pricing strategy can operate in the real-time cloud market, along with the instance provisioning algorithms.

\section{Numerical Examples: \\Interpretation of Equilibrium}\label{subsec:numerical analysis of equilibrium}

We now show numerical examples to understand the equilibrium: how the QoS (i.e., $q_o$ and $q_s$) and the customers' resource utilization (i.e., $\gamma_o$ and $\gamma_s$) impact on price, market share, and revenue. We consider a setup within the condition $\tt C1$. 
We first choose a homogeneous QoS for on-demand instance service with $q_o = 100$ and a fixed average resource utilization of on-demand instance service with $\gamma_o = 0.2.$ Meanwhile, we set a larger average resource utilization of spot instance service, i.e., $\gamma_s = 0.3$ or $0.5,$ and a varying QoS of spot instance service with $q_s = 10$ to $50.$ We choose other parameters as $t = 1$ and $m = 1$, which stand for a single time slot and cluster.



\subsection{Impacts of Spot Instance Service on CSP}

\smallskip
\noindent{\bf \em Impacts on on-demand instance service price and market share.}
Interestingly, the on-demand instance service price $p_o^\star$ increases as the spot instance's QoS $q_s$ as shown in Fig.~\ref{fig:price}. This phenomenon can be jointly explained with the on-demand market share $\Theta_o^\star$ in Fig.~\ref{fig:market_share}, which decreases with respect to $q_s.$ This implies the customers with a higher willingness to pay $\theta$ remain at the on-demand instance service while customers having a lower willingness to pay $\theta$ move to a cheaper service (i.e., spot instance service). Thus, the CSP can be paid a higher price by high-end customers. This tendency is observed more clearly when spot instance's resource utilization increases from $\gamma_s = 0.3$ to $\gamma_s = 0.5.$

\smallskip
\noindent{\bf \em Impacts on spot instance service price and market share.}
Fig.~\ref{fig:price} also illustrates that the spot instance price $p^\star_s$ increases as spot instance's QoS $q_s$. This implies that CSP can be paid a higher price $p_s$ when it can guarantee improved QoS for spot instance service. It is not a straightforward but interesting result because it implies the spot instance service does not cause perfect competition, in which $p^\star_o$ and $p^\star_s$ go to zero, against the on-demand instance service. In general, the perfect competition in pricing frequently happens among services sharing potential customers, and it tends to be severe as the difference between the services' QoSes decreases.
However, we do not have this phenomenon because the spot instances only use the remaining resource from the on-demand instance service, thus it does not threaten the existing instance services in terms of pricing.
Moreover, the market share of spot instance $\Theta_s^\star$ is also enlarged with its QoS $q_s$ as shown in Fig.~\ref{fig:market_share}. This implies that the spot instance service takes the customers having a mid-range willingness to pay from the on-demand instance service. 

\smallskip
\noindent{\bf \em Impacts on CSP's revenue.}
Fig.~\ref{fig:revenue1} shows that the total revenue $\pi^\star = \pi^\star_o + \pi^\star_s$ increases with spot QoS $q_s$. This is due to the fact that the revenue increase of the spot instance service $\pi_s^\star$ can compensate for the diminishing revenue of the on-demand instance service $\pi_o^\star$. The revenue from on-demand instance service $\pi_o^\star$ is slightly decreasing with $q_s$ because the equilibrium price $p_o^\star$ increases, but it loses its market share $\Theta_o^\star$ as shown in Fig.~\ref{fig:price} and Fig.~\ref{fig:market_share}.

\subsection{Comparison to On-demand-only Instance Service}

The advantage of the spot instance service is also revealed by comparison with the case where the CSP only serves the on-demand instance service on its clusters. Lemma~\ref{lem:on-demand-only} states the result of equilibrium analysis for the market having only on-demand instance service, which is through the same process (i.e., backward induction) as in Section~\ref{subsec:stage2} and~\ref{subsec:stage1}.

\begin{figure}[h]
  \centering
  \includegraphics[width=0.57\columnwidth]{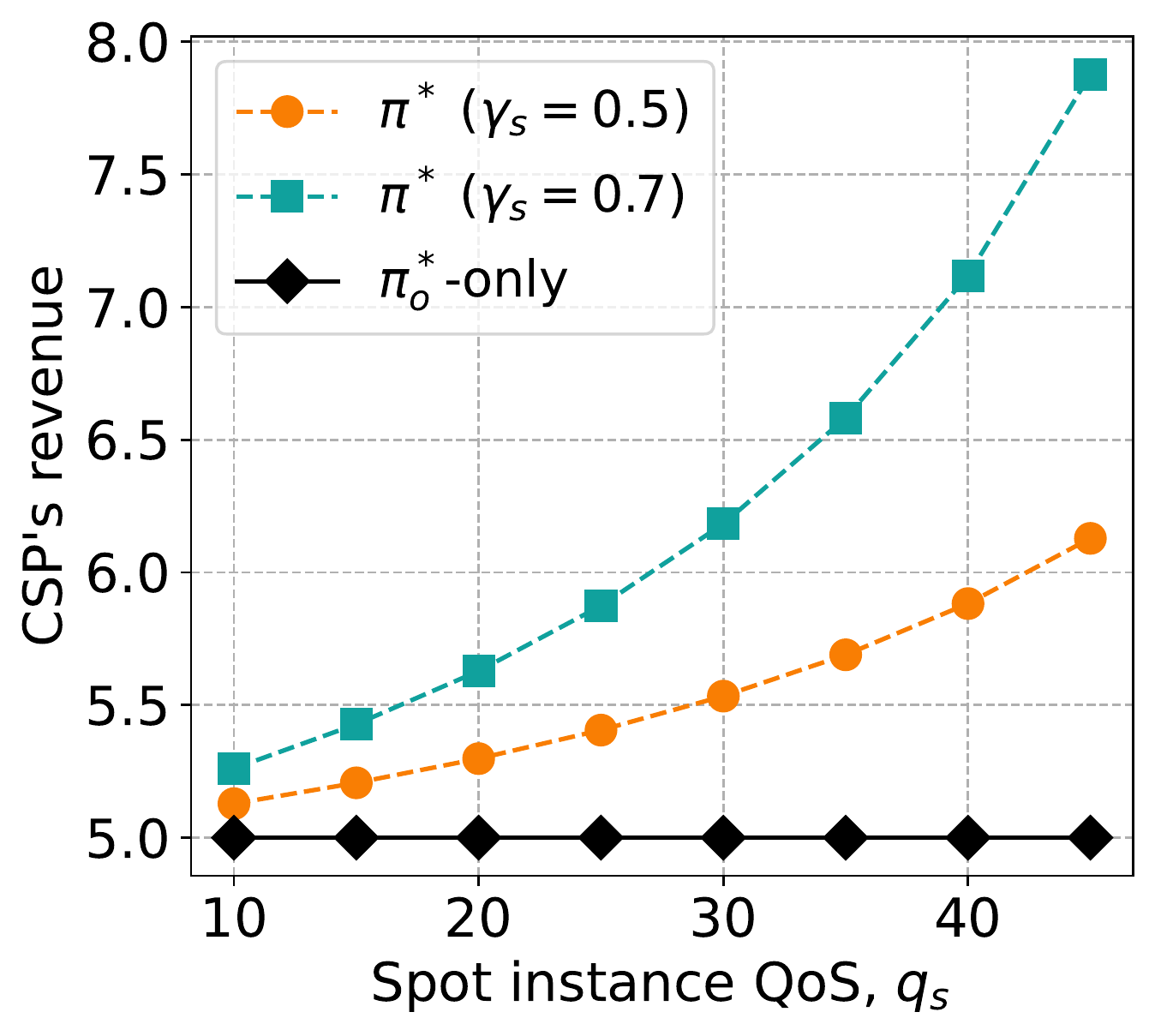}
  \vspace{-0.2cm}
  \caption{\small CSP's revenue at the equilibrium when (i) on-demand only case ($\pi_o^\star$-only) and (ii) spot + on-demand case ($\pi^\star$). We choose the parameters: $q_o = 100$, $\gamma_o = 0.2$, $t = 1,$ and $m = 1.$ }
  \label{fig:revenue_comparison}
\end{figure}

\begin{lemma}\label{lem:on-demand-only}
   If a CSP only serves on-demand instance service, then at the equilibrium, the service price, the market share, and the revenue of CSP are determined as:
\begin{align}
\bar p_o^\star = \frac{q_o}{2}, \;\; |\bar \Theta_o^\star| = \frac{1}{2}, \;\; \text{and} \;\; \bar \pi^\star = \frac{\gamma_o q_o}{4},
\end{align}
where to avoid confusion, we use a bar in the notation.
\end{lemma}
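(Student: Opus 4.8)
The plan is to replicate the backward-induction argument of Sections~\ref{subsec:stage2} and~\ref{subsec:stage1}, now specialized to the service set $\set A = \{o, n\}$ in which spot is absent, so that each customer chooses only between on-demand and no service. First I would redo Stage~II to find the market share $\bar\Theta_o$ for a given on-demand price $p_o$. Since spot is unavailable, customer $i$ prefers on-demand exactly when its utility in~\eqref{eq:utility} is positive, i.e.\ $\theta_i q_o x_i - p_o x_i > 0$, which simplifies to $\theta_i > p_o/q_o$; this is just the single-service analogue of the threshold in Lemma~\ref{lem:stage2} with the spot comparison removed. Hence $\bar\Theta_o = (p_o/q_o, 1]$ and $|\bar\Theta_o| = 1 - p_o/q_o$.

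Next I would carry out Stage~I by writing the CSP's revenue as a function of $p_o$ alone. Invoking Definition~\ref{def:average resource utilization} to rewrite the demand integral, $\int_0^1 x_\theta\, \bm 1_{\{\theta \in \bar\Theta_o\}}\, d\theta = \gamma_o\, |\bar\Theta_o|$, the revenue from~\eqref{eq:def_revenue} becomes
\begin{equation*}
\bar\pi(p_o) = p_o\, \gamma_o \Big(1 - \frac{p_o}{q_o}\Big) = \gamma_o\, p_o - \frac{\gamma_o}{q_o}\, p_o^2,
\end{equation*}
which is a downward parabola in $p_o$. The first-order condition $\gamma_o - 2\gamma_o p_o / q_o = 0$ gives $\bar p_o^\star = q_o/2$, and because $\bar\pi''(p_o) = -2\gamma_o/q_o < 0$ the objective is strictly concave, so this critical point is the unique global maximizer. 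I note that, unlike the two-service case, concavity is automatic here and no side condition analogous to ${\tt C1}$ is required.

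Finally I would substitute $\bar p_o^\star = q_o/2$ back into the Stage~II expressions to obtain $|\bar\Theta_o^\star| = 1 - (q_o/2)/q_o = 1/2$ and, from the revenue formula, $\bar\pi^\star = (q_o/2)\,\gamma_o\,(1/2) = \gamma_o q_o/4$, which are exactly the claimed quantities. The argument involves no genuine analytic difficulty; the only step requiring care is the correct application of Definition~\ref{def:average resource utilization} to collapse the demand integral into $\gamma_o |\bar\Theta_o|$, since this is what turns the revenue into a clean quadratic and keeps the derivation consistent with the spot-plus-on-demand analysis.
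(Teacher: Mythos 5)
Your proposal is correct and follows the same backward-induction route as the paper's own proof in Appendix~\ref{proof:lem:on-demand-only}: the Stage~II threshold $\theta > p_o/q_o$, the revenue $\gamma_o p_o(1-p_o/q_o)$ via Definition~\ref{def:average resource utilization}, and the first-order condition on a strictly concave quadratic. You merely spell out the final substitutions that the paper leaves as "easily derived," so there is nothing substantive to add.
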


The proof is described in Appendix~\ref{proof:lem:on-demand-only}.
Note that we omitted $t$ and $m$ in notations for simplicity. Lemma~\ref{lem:on-demand-only} describes that in the market where only on-demand instances are served, the equilibrium price is set as half of QoS. At this time, half of the customers use the service, and the other does not use it. Then, the CSP's revenue is determined by multiplying the price and the market share. By comparing Lemma~\ref{lem:on-demand-only} to Theorem~\ref{thm:equilibrium}, we can interpretively understand how CSPs should change their pricing policy by adopting spot instance service. Accordingly, we also know how their market share changes and how much their revenue increase.

\begin{table*}[t]
\begin{center}
\renewcommand{\tabcolsep}{3mm}
  \renewcommand{\arraystretch}{1.7} 
\begin{tabular}{@{}cccc@{}}
\toprule
Market   & No-service ($\Theta_n^\star$) & spot ($\Theta_s^\star$) & on-demand ($\Theta_o^\star$)     \\ \midrule
on-demand only &  $[0,\frac{1}{2})$ & $\emptyset$ & $ [\frac{1}{2},1]$  \\ \midrule
on-demand + spot &  $\Big[0,\frac{\eta \gamma_o (q_o - q_s)}{4\gamma_o \gamma_s q_o - \eta^2 q_s}$ \Big)  & $\Big[\frac{\eta \gamma_o (q_o - q_s)}{4\gamma_o \gamma_s q_o - \eta^2 q_s}, \frac{2\gamma_o \gamma_s q_o - \eta \gamma_o q_s}{4\gamma_o \gamma_s q_o - \eta^2 q_s} \Big)$ & $ \Big[\frac{2\gamma_o \gamma_s q_o - \eta \gamma_o q_s}{4\gamma_o \gamma_s q_o - \eta^2 q_s} , 1\Big]$ \\ \bottomrule
\end{tabular}
\vspace{0.3cm}
\caption{Market share at the equilibrium}
\label{table}
\end{center}
\vspace{-0.7cm}
\end{table*}

\begin{proposition}\label{prop:monopoly}
Under the condition $\tt C1,$ we get the following properties by comparison of two markets, i.e., on-demand only versus on-demand + spot:
\begin{compactenum}[(a)]
    \item (price) $p_o^\star  > \bar p_o^\star$,
    \item (market share) $|\Theta_o^\star| < |\bar \Theta_o^\star|$ and
    $|\Theta_o^\star| + |\Theta_s^\star| > |\bar\Theta_o^\star|$,
    \item (revenue) $\pi^\star > \bar \pi^\star$,
\end{compactenum} 
where we omit $t$ and $m$ in above notations for simplicity.
\end{proposition}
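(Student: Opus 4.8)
The plan is to reduce each of the three comparisons to an elementary inequality by clearing the common denominator shared by all the equilibrium quantities of Theorem~\ref{thm:equilibrium}. Writing $D = 4\gamma_o\gamma_s q_o - \eta^2 q_s$, I would first record two algebraic consequences of condition $\tt C1$ that are used throughout: the right inequality $\gamma_o < \eta/2$ is equivalent to $\gamma_o < \gamma_s$, and the left inequality $\eta q_s/2q_o < \gamma_o$ is equivalent to $2\gamma_o q_o - \eta q_s > 0$. Both of these, together with the model assumption $q_o > q_s$, are the only facts the argument needs.

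The crucial preliminary step is to show $D > 0$, since every cross-multiplication below is sign-sensitive. Starting from $2\gamma_o q_o > \eta q_s$ one gets $4\gamma_o\gamma_s q_o > 2\gamma_s\eta q_s$, whence $D > \eta q_s(2\gamma_s - \eta) = \eta q_s(\gamma_s - \gamma_o) > 0$ because $\gamma_s > \gamma_o$. With the sign of $D$ fixed, I can cross-multiply freely in each part without flipping inequalities.

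For (a), the claim $p_o^\star > \bar p_o^\star = q_o/2$ becomes $4\gamma_o\gamma_s(q_o - q_s) > D$ after clearing denominators; the $4\gamma_o\gamma_s q_o$ terms cancel, leaving $\eta^2 > 4\gamma_o\gamma_s$, i.e.\ $(\gamma_o - \gamma_s)^2 > 0$, which holds strictly since $\gamma_o \neq \gamma_s$. Part (c), $\pi^\star > \bar\pi^\star = \gamma_o q_o/4$, collapses to the identical inequality after dividing both sides by $\gamma_o q_o$, so it follows with no extra work. For the first half of (b), $|\Theta_o^\star| < 1/2$ reduces, again after clearing $D$, to $\eta < 2\gamma_s$, i.e.\ $\gamma_o < \gamma_s$, which is exactly the right half of $\tt C1$.

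The part I expect to require the most care is the second inequality of (b), $|\Theta_o^\star| + |\Theta_s^\star| > 1/2$, because the two market-share numerators must be added and simplified before comparison. Their sum is $3\gamma_o\gamma_s q_o - \gamma_o^2 q_o - \gamma_s\eta q_s$; comparing this against $D/2$ and cancelling, the target inequality factors as $(\gamma_s - \gamma_o)\bigl(\gamma_o q_o - \tfrac{1}{2}\eta q_s\bigr) > 0$. Here both consequences of $\tt C1$ enter simultaneously: $\gamma_s - \gamma_o > 0$ and $\gamma_o q_o - \tfrac{1}{2}\eta q_s = \tfrac{1}{2}(2\gamma_o q_o - \eta q_s) > 0$. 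Spotting this clean factorization is the only nonroutine manipulation in the whole argument; once it is in hand, the proposition follows immediately.
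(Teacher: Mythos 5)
Your proposal is correct and follows essentially the same route as the paper's proof: clear the common denominator $4\gamma_o\gamma_s q_o - \eta^2 q_s$, reduce (a) and (c) to $\eta^2 > 4\gamma_o\gamma_s$, reduce the first half of (b) to $\gamma_s > \eta/2$, and factor the second half of (b) as $(\gamma_s - \gamma_o)(2\gamma_o q_o - \eta q_s) > 0$. Your only addition is the explicit check that the common denominator is positive before cross-multiplying, which the paper establishes separately in the proof of Lemma~\ref{lem:condition} rather than restating here.
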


\smallskip
The proof is shown in Appendix~\ref{proof:prop:monopoly}.
Proposition~\ref{prop:monopoly} implies that CSP gets a higher price and larger revenue when it both serves on-demand and spot instance services, compared to the case where the only on-demand instance is being served. 
In Fig.~\ref{fig:revenue_comparison}, we can clearly see how much more revenue CSP can earn by serving spot instances alongside on-demand instances. This tendency becomes stronger as the quality of spot instance service ($q_s$) improves and the user's utilization ($\gamma_s$) increases.
This is non-trivial in the market because there exists a price competition between services for market share, and it may cause to decrease in providers' equilibrium price and revenue~\cite{arrow1954existence}. In our analysis, however, we show that the CSP can increase its revenue by offering spot instance service. This is caused by the fact that the spot instance service enlarges the aggregate market share, and it does not induce severe cannibalization among services by using idle resources.

\subsection{Impacts of Spot Instance Service on Customers}

So far, our analysis has focused on the implication for the CSP. In this section, we aim to understand the impact of spot instance service on customers' service selection and from which turns in its utility. To this end, we apply the equilibrium price in Theorem~\ref{thm:equilibrium} to customers' best response for service selection derived in Lemma~\ref{lem:stage2}. The result is in Table~\ref{table}, which states the market share $\Theta_a^\star$ for each service $a \in \{o, s\}$. Again, we omit $t$ and $m$ for notational simplicity.


\begin{figure}[h]
  \subfigbottomskip = 0cm
  \subfigcapmargin = 3pt
  \begin{center}
    \subfigure[\small Customers' utility per unit resource usage ($q_s = 30$)
      \label{fig:customer_utility}]{
      \includegraphics[width=0.47\columnwidth]{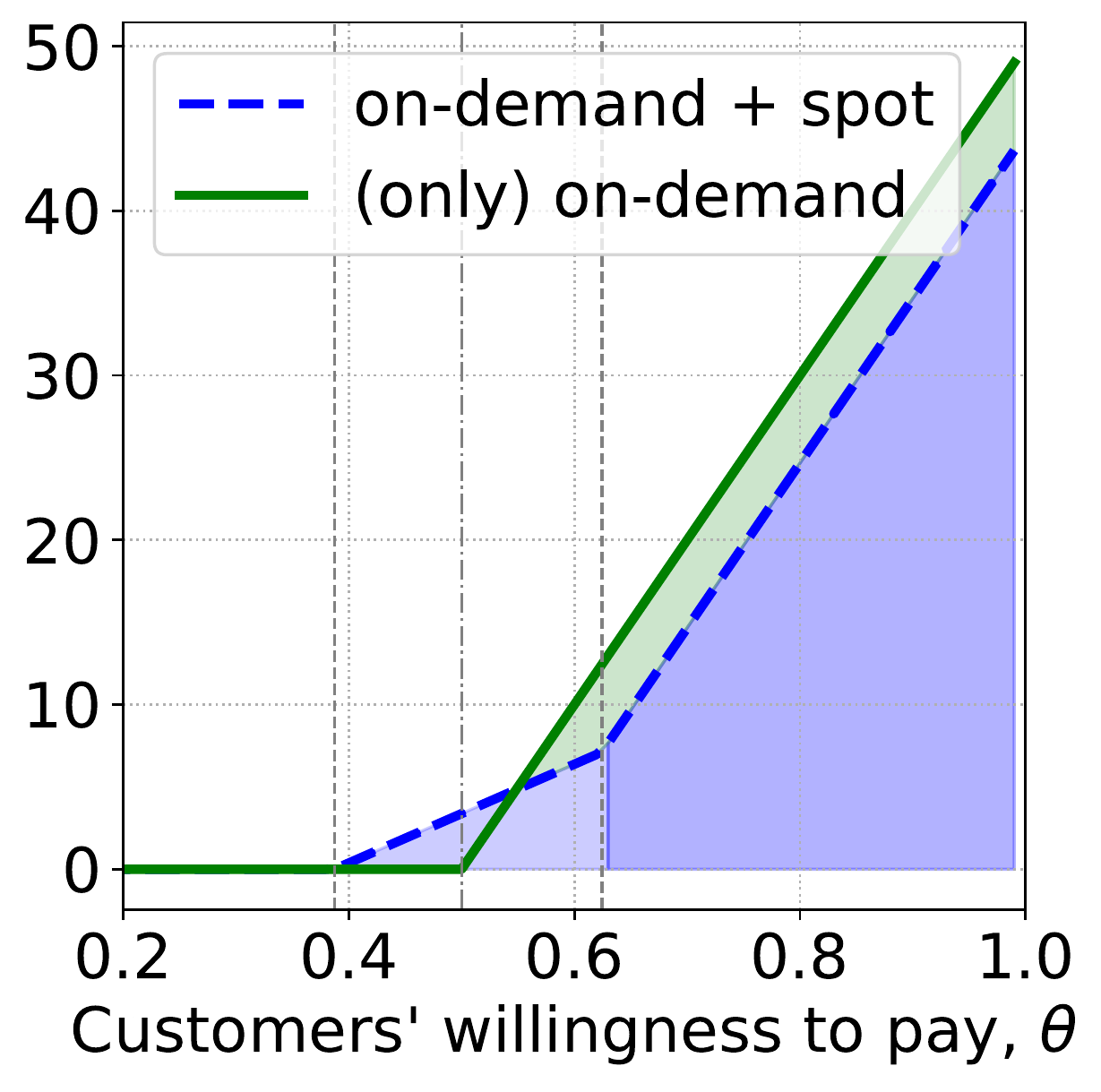}
    }  
    \hspace{-0.3cm}
    \subfigure[\small Aggregate customers' utility
    \label{fig:customer_aggregate_utility}]{
      \includegraphics[width=0.47\columnwidth]{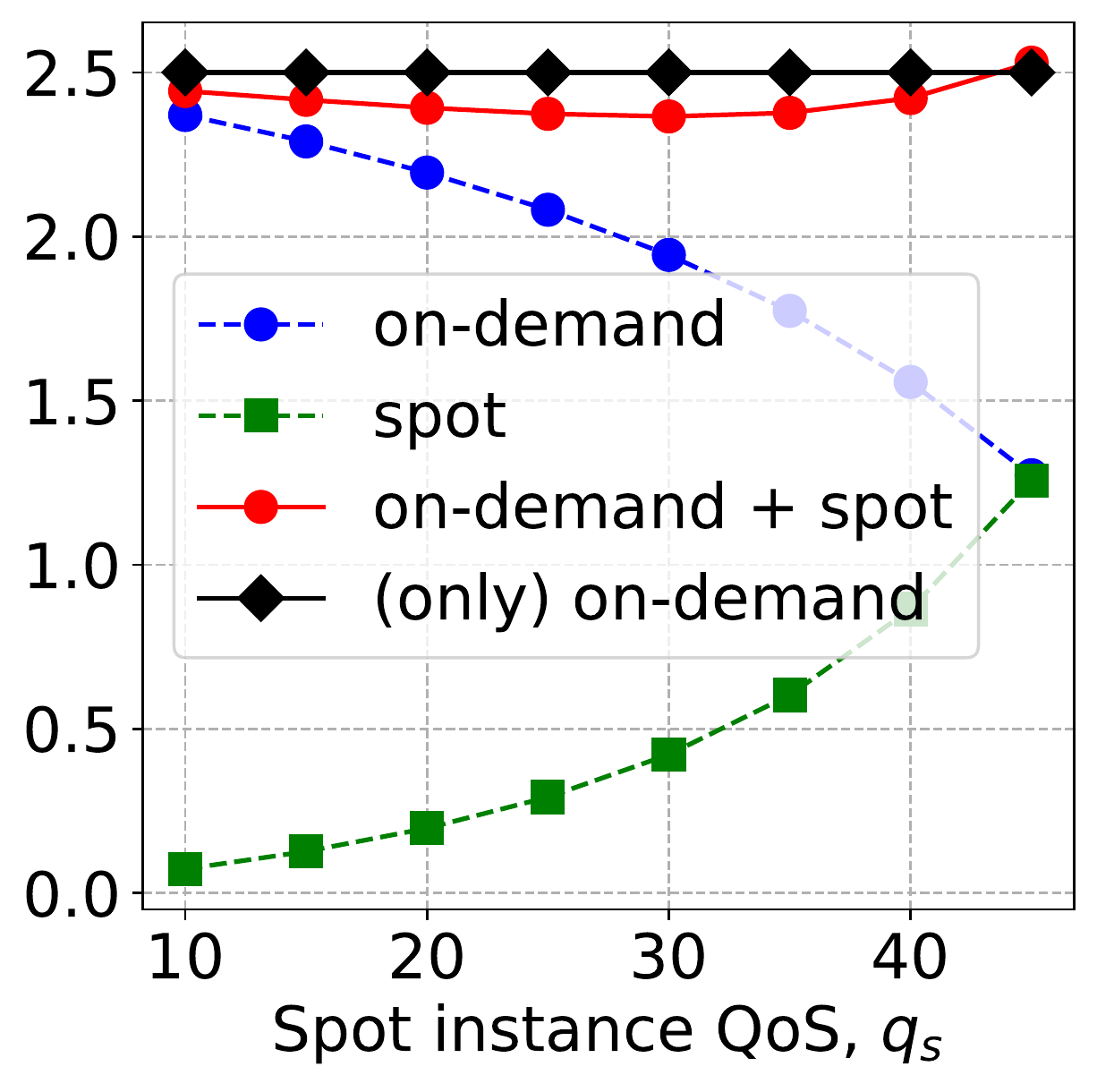}
    }
  \end{center}
  \caption{Customers' utility at the equilibrium. We choose the parameters: $q_o = 100,$ $\gamma_o = 0.2,$ $\gamma_s = 0.5,$ $t = 1,$ and $m = 1$.}
  \label{fig:uility}
\end{figure}

Fig.~\ref{fig:customer_utility} shows both the service selection and the utility per unit resource usage, i.e., $u_i(\theta) / x_i,$ according to the customers' willingness to pay $\theta.$ When CSP provides both on-demand and spot services, illustrated with the blue dashed line in Fig.~\ref{fig:customer_utility}, the users with $\theta \in [0.41, 0.62)$ choose the spot instance, and the users with $\theta \in [0.62, 1]$ choose the on-demand instance.
In addition, when CSP only offers on-demand instances, described as the green solid line in Fig.~\ref{fig:customer_utility}, the customers with $\theta \in [0.5,1]$ subscribe to the on-demand service.
Therefore, by comparing the two graphs, we can see that adopting a spot instance service can reduce the customers' utility per unit resource while increasing CSP's total market share.

\begin{figure*}[t!]
  \subfigbottomskip = 0cm
  \subfigcapmargin = 3pt
  \begin{center}
    \subfigure[\small Average CPU utilization
    \label{fig:cpu_utilization}]{
      \includegraphics[width=0.55\columnwidth]{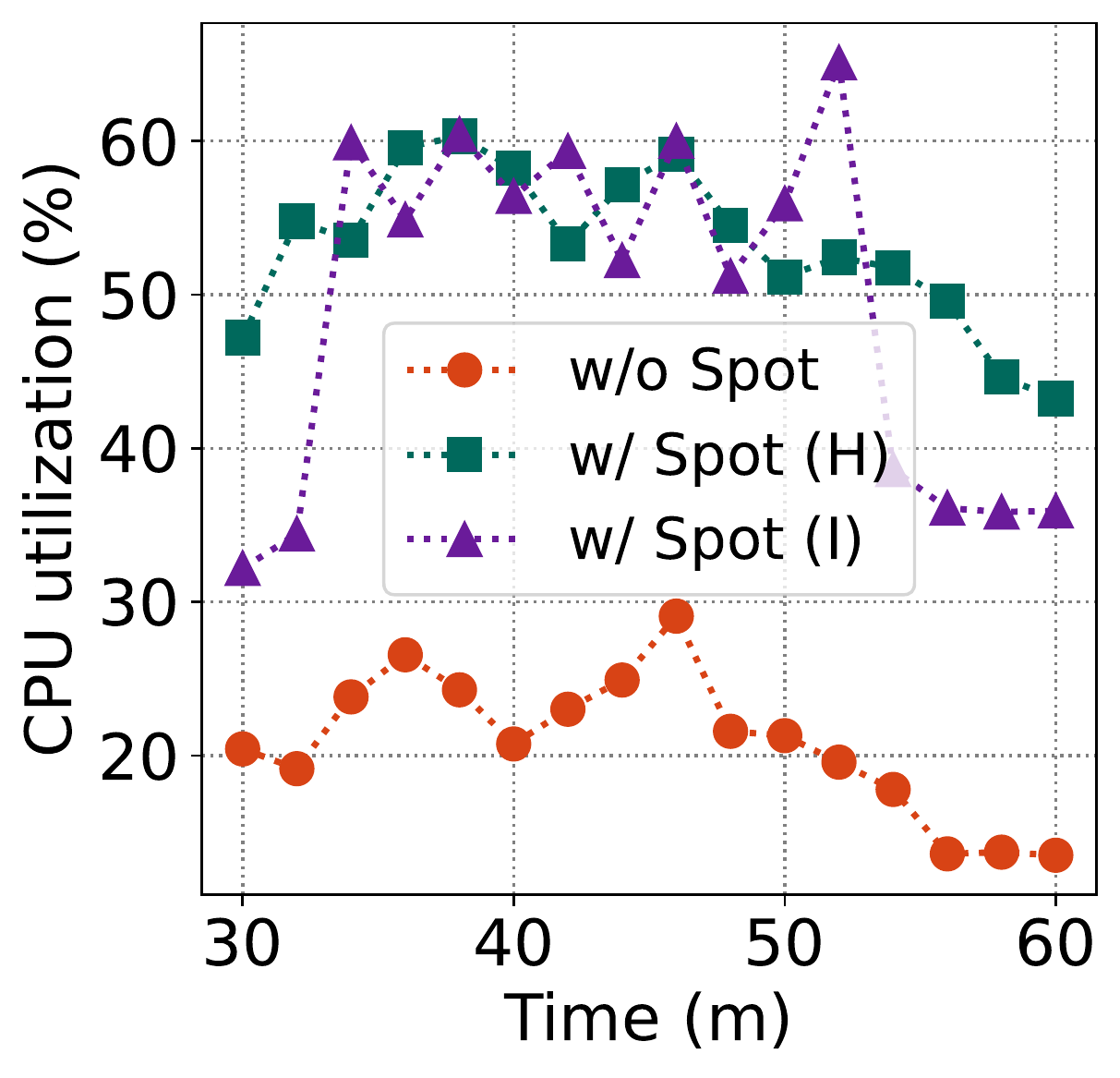}
    }\hspace{0.1cm} 
    \subfigure[\small Average RAM utilization
    \label{fig:ram_utilization}]{
      \includegraphics[width=0.55\columnwidth]{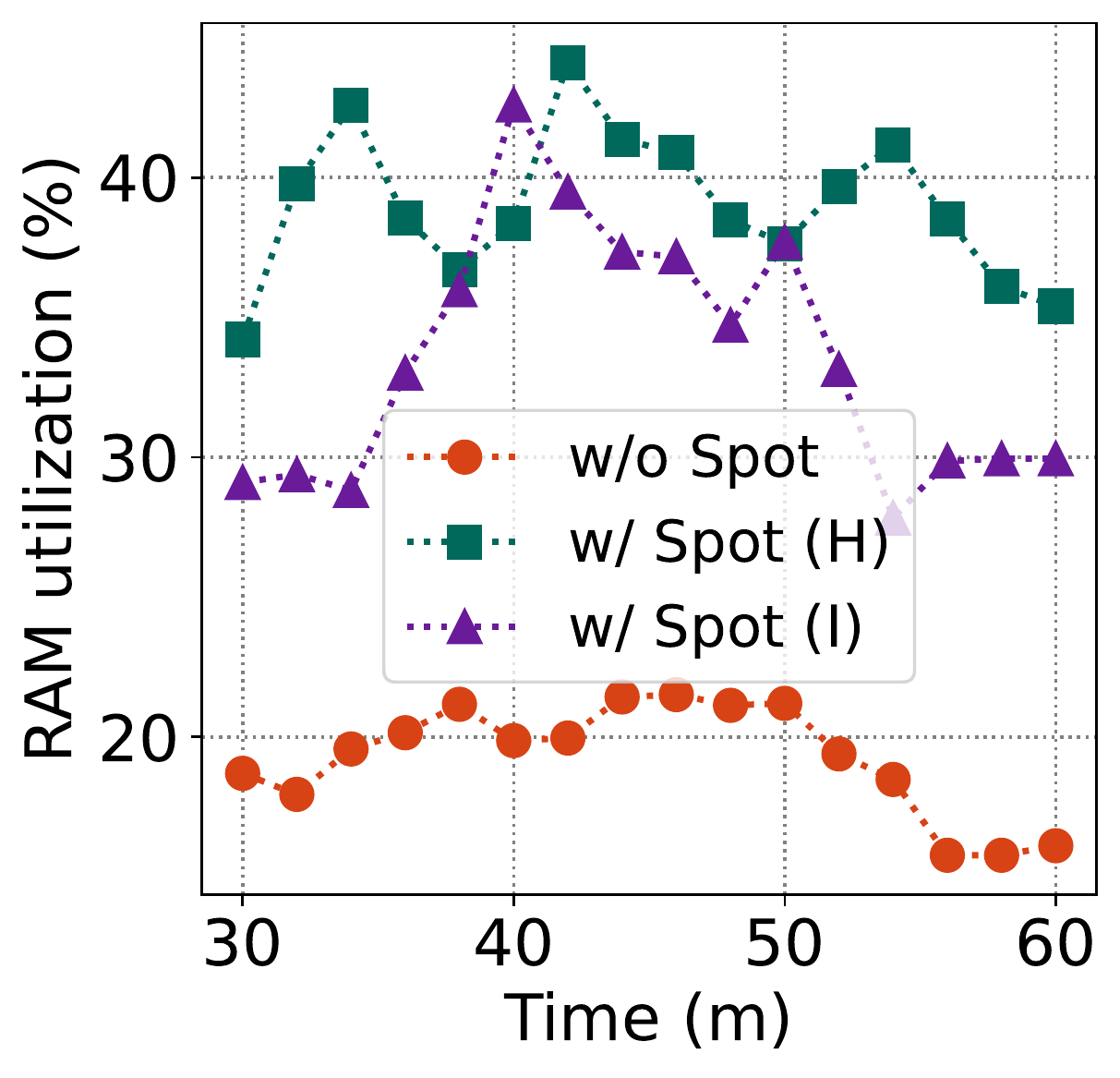}
    }\hspace{0.1cm} 
    \subfigure[\small Revenue
    \label{fig:revenue}]{
      \includegraphics[width=0.55\columnwidth]{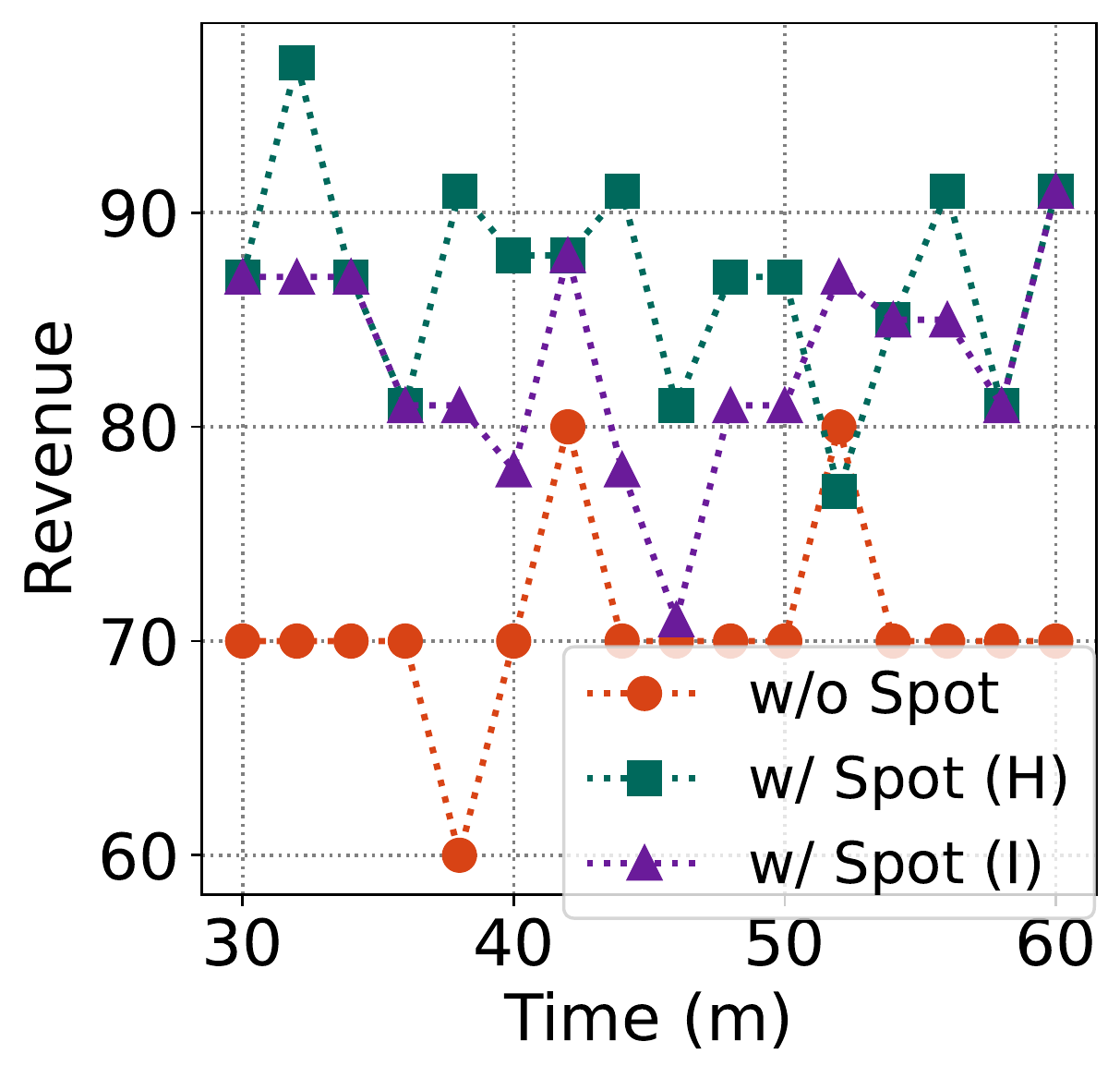}
    }
  \end{center}
  \caption{Comparison of on-demand only case (w/o spot), on-demand + spot with heuristic algorithm case (w/ spot (H)), and on-demand + spot with ILP algorithm case (w/ spot (I)). 
  }
  \label{fig:fig3} 
\end{figure*}

\begin{proposition}\label{prop:customer}
Under the condition $\tt C1,$ we obtain the aggregate customers' utility for each service at the equilibrium:
\begin{compactitem}
\item {on-demand instance service:}
\begin{align*}
\int_{\Theta_o} u_i^\star(\theta) &~ d \theta =
     q_o \gamma_o \gamma_s (2 \gamma_o q_o - \eta q_s) \\
     \times & \Big(\frac{\gamma_o q_s(\gamma_s - \gamma_o) + 2 \gamma_o \gamma_s (q_o + q_s) - \eta^2 q_s}{2(4\gamma_o \gamma_s q_o - \eta^2 q_s)^2}\Big),
\end{align*}
\item {spot instance service:}
\begin{align*}
    \int_{\Theta_s} u_i^\star(\theta)~ d \theta = \frac{\gamma_o^2 \gamma_s q_o^2 q_s (\gamma_s- \gamma_o)^2}{2(4\gamma_o \gamma_s q_o - \eta^2 q_s)^2}.
\end{align*}
\end{compactitem}
\end{proposition}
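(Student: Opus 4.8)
The plan is to evaluate $\int_{\Theta_a} u_i^\star(\theta)\,d\theta$ for each service directly from the utility definition~\eqref{eq:utility}, then substitute the equilibrium prices and market shares of Theorem~\ref{thm:equilibrium}. By~\eqref{eq:utility}, a customer with willingness to pay $\theta$ on service $a\in\{o,s\}$ enjoys utility per unit resource $\theta q_a - p_a^\star$. Aggregating over the customers at each $\theta$-level and invoking the uniform-distribution assumption underlying Definition~\ref{def:average resource utilization}, the resource density is effectively constant at $\gamma_a$ on $\Theta_a$, so
\[
\int_{\Theta_a} u_i^\star(\theta)\,d\theta = \gamma_a \int_{\Theta_a} \big(\theta q_a - p_a^\star\big)\,d\theta .
\]
Everything then reduces to an elementary quadratic integral over the market-share interval supplied by Lemma~\ref{lem:stage2}, followed by simplification.

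For the spot service the computation is short because of a vanishing boundary term. The lower endpoint of $\Theta_s$ is exactly $\theta_L = p_s^\star/q_s$, where the integrand $\theta q_s - p_s^\star$ is zero. Writing $\theta_U = (p_o^\star - p_s^\star)/(q_o - q_s)$ for the upper endpoint and using $p_s^\star = q_s\theta_L$, the integral telescopes to $\tfrac{q_s}{2}(\theta_U - \theta_L)^2 = \tfrac{q_s}{2}|\Theta_s|^2$. Inserting $|\Theta_s|$ from~\eqref{eq:eq_marketshare} and multiplying by $\gamma_s$ gives the stated closed form at once.

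For the on-demand service the boundary term does not vanish, so I would carry out the full evaluation. With the same $\theta_U$ as lower limit and $1$ as upper limit, $\int_{\theta_U}^{1}(\theta q_o - p_o^\star)\,d\theta = |\Theta_o|\big(q_o - \tfrac{q_o}{2}|\Theta_o| - p_o^\star\big)$, using $1-\theta_U = |\Theta_o|$. Multiplying by $\gamma_o$ and substituting $|\Theta_o|$ and $p_o^\star$ from~\eqref{eq:eq_marketshare} and~\eqref{eq:eq_price} produces a single rational function over the common denominator $(4\gamma_o\gamma_s q_o - \eta^2 q_s)^2$; the claimed prefactor $q_o\gamma_o\gamma_s(2\gamma_o q_o - \eta q_s)$ is then just $q_o D\,\gamma_o|\Theta_o|$ with $D = 4\gamma_o\gamma_s q_o - \eta^2 q_s$, so it factors out cleanly.

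The main obstacle is precisely this last simplification. The bracket reduces to $q_o$ times $\big(2D - \gamma_s(2\gamma_o q_o - \eta q_s) - 4\gamma_o\gamma_s(q_o - q_s)\big)/(2D)$, and one must expand every occurrence of $\eta$ and $\eta^2$ via $\eta = \gamma_o + \gamma_s$ and cancel the $\gamma_o\gamma_s q_o$, $\gamma_o^2 q_s$, $\gamma_o\gamma_s q_s$, and $\gamma_s^2 q_s$ monomials to recover $\gamma_o q_s(\gamma_s-\gamma_o) + 2\gamma_o\gamma_s(q_o+q_s) - \eta^2 q_s$. This bookkeeping is purely mechanical but error-prone; organizing the terms by monomials in $\gamma_o,\gamma_s,q_o,q_s$ is the safe way to verify it. The spot case needs none of this care, which is the structural reason its formula is so much more compact.
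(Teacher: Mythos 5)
Your proposal is correct and follows essentially the same route as the paper's proof: integrate $\gamma_a(\theta q_a - p_a^\star)$ over the equilibrium market-share interval and substitute the closed forms from Theorem~\ref{thm:equilibrium}. Your additional observation that the spot integrand vanishes at the lower endpoint $p_s^\star/q_s$, collapsing that case to $\tfrac{\gamma_s q_s}{2}|\Theta_s^\star|^2$, is a clean shortcut for the part the paper only sketches with ``similarly,'' and both of your final expressions check out against the stated formulas.
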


\smallskip
The proof is described in Appendix~\ref{proof:prop:customer}. Proposition~\ref{prop:customer} states the aggregate utility of customers for each service when CSP serves both on-demand and spot instances. In Fig.~\ref{fig:customer_aggregate_utility}, we observe that the aggregate utility of on-demand service (the blue dashed line with circles) decreases with $q_s.$ This is because the market share of on-demand service decreases, and the price increases as $q_s$ increases, as shown in Fig.~\ref{fig:equilibrium for CSP}. On the other hand, the aggregate utility of spot instances (the green dashed line with squares) increases because both its price and market share increase, as also seen in~Fig.\ref{fig:equilibrium for CSP}. Finally, the sum of the aggregate utilities (the red solid line with circles) tends to be smaller but can be greater than the case of providing only the on-demand instance service (the black solid line with diamonds), depending on the QoS of the spot instance and the amount of resource utilization. In other words, servicing spot instances can increase not only the CSP's revenue but also the customer's total surplus under some conditions.

\section{Experimental Evaluation}\label{sec:evaluation}

\subsection{Environment}

In this section, we experiment the spot instance provisioning with pricing.
To this end, a simple test bed based on OpenStack is constructed, composed of a single controller node and three compute nodes (i.e., three computing clusters). For experimental tractability, we use a deterministic instance request trace, a mixture of spot and on-demand instance requests. We decide which compute node to provision a new instance by (a) heuristic algorithm or (b) integer linear programming (ILP)~\cite{vanderbei2020linear} algorithm for securing instances' priority and load-balancing among compute nodes.
Recall that according to SLA, the on-demand instance has a higher priority on provisioning than the spot instance. 
Each provisioned instance is assumed to generate workloads randomly that follow the Poisson distribution.
Moreover, the spot price is determined by two marginal prices, one of which is the lower bound of price, which is given by the analysis result in Section~\ref{sec:analysis}, and another is the upper bound of price based on customers' maximum bids similar to multi-unit auction mechanism~\cite{krishna2009auction}.

\subsection{Spot and On-demand Instance Provisioning Algorithms}

Two simple provisioning algorithms are deployed to understand how resource utilization and the CSP's revenue are affected by the spot/on-demand instance provisioning strategy. For both algorithms, we use two thresholds of computing resource (i.e., CPU and RAM) utilization for managing resources: $th_{soft}$ and $th_{hard},$ where $th_{soft} < th_{hard}.$ For example, if a compute node's resource utilization is lower than $th_{soft},$ then all spot instances are safe, and a new spot request can also be provisioned. 
However, some of the spot instances can be reclaimed by CSP if the node's resource utilization is higher than $th_{hard}$ until the utilization of the node goes under the hard threshold. 
Furthermore, if a new on-demand instance request is submitted and the node's utilization is over the soft threshold, then
we terminate the spot instances until under the soft threshold and provision the on-demand instance.


\smallskip
\begin{compactenum}[\em (a)]
\item {\em Heuristic algorithm:} Traversing all nodes for finding a target compute node for provisioning spot/on-demand instance and terminating spot instance.
\item {\em ILP algorithm:} Formulating a problem in which the provisioning/terminating decisions are set as the integer decision variables. The thresholds are reflected as the constraints of the problem. OR-Tools 6.10~\cite{ORTools} is used to solve this integer linear programming.
\end{compactenum}



\subsection{Impacts of Spot on Resource Utilization and Revenue} 
 
 Figure~\ref{fig:fig3} illustrates a comparison of the cases, which are
 on-demand only (w/o spot), on-demand + spot using the heuristic algorithm (w/ spot (H)), and on-demand + spot using ILP algorithm (w/ Spot (I)), in terms of CPU/RAM utilization and CSP's revenue, which are the average of three computing clusters at each time slot. Fig.~\ref{fig:cpu_utilization} and Fig.~\ref{fig:ram_utilization} respectively depict the average CPU and RAM utilization where the thresholds are assumed to be $th_{soft} = 50\%$ and $th_{hard} = 70\%.$ In Fig.~\ref{fig:revenue}, we set the on-demand price as $10$ and the lower bound of spot price auction as $3$, which reflects the analysis result in Section~\ref{sec:analysis}. Then, the findings from Figure~\ref{fig:fig3} can be summarized in what follows:
 \begin{compactitem}[$\circ$]
 \item When the resource utilization by on-demand instances is sufficiently low, as seen in Fig.~\ref{fig:cpu_utilization} and Fig.~\ref{fig:ram_utilization}, by adopting spot instances in the cloud, CSP can achieve the revenue increase as shown in Fig.~\ref{fig:revenue}.
 \item Under both heuristic and ILP algorithms, the CPU utilization is well-managed under the hard threshold, i.e., $th_{hard} = 70 \%,$ as shown in Fig.~\ref{fig:cpu_utilization}. 
 \item In Fig.~\ref{fig:revenue}, at the time epoch of $52$ minute, the green square marker (when the Spot instances are provisioned by the heuristic algorithm) is below the red circle marker (without Spot instances). It indicates the possibility of revenue loss when we adopt the spot instance service with an improper provisioning algorithm. 
 This is because the lower price spot instance hinders the higher price on-demand instance provisioning. To avoid this phenomenon, we might need a proper eviction and provisioning strategy of spot instances, which would be included in our future direction.
 \end{compactitem}

 


\section{Related Work}

Cloud service providers consider Spot instances an attractive solution that exploits the underutilized resources in their cloud and makes an additional profit. Since AWS started providing spot instances in 2009~\cite{AWSSpot}, Preemptible VMs of Google Cloud~\cite{GCPSpot}, Spot VMs of Azure~\cite{AzureSpot}, and Transient Virtual Servers of IBM Cloud~\cite{IBMSpot} have been operating Spot instance services in a similar form with AWS. In addition, an auction-like algorithm adopting maximum bids submitted by customers has been used for pricing the Spot instance services in~\cite{AWSSpot, AzureSpot, IBMSpot, GCPSpot}, because the auction is an efficient tool for solving the resource allocation and the pricing decision concurrently. In 2017, AWS announced that it no longer uses auction directly but determines the Spot price according to the resource supply and the demand of Spot instances~\cite{AWSSpotArticle}. However, they still use an auction-like Spot pricing in which the customers' maximum bids are used as the upper bound of price, so that if the Spot price goes up above the bids, then Spot is reclaimed.


Including the paper in~\cite{agmon2013deconstructing}, there are many papers that have investigated how to determine the Spot price in the cloud resource market, as seen in surveys~\cite{luong2017resource, kumar2018survey, cong2020survey}. In~\cite{agmon2013deconstructing}, the authors deconstructed the AWS Spot price, so that they found out Spot price might not be originated from a pure market-driven auction but was artificially set by AWS. Moreover, the papers in~\cite{toosi2016auction, li2018online, shi2014online,  hosseinalipour2021two, zheng2015bid, khodak2018learning,  song2020pricing}
have analyzed the economic impacts of Spot price based on an auction-like mechanism. In~\cite{toosi2016auction}, the authors designed an auction mechanism for cloud spot markets to maximize the cloud provider's profit. In~\cite{li2018online}, the authors proposed an online auction mechanism for maximizing the revenue of IaaS clouds, which is composed of the allocation rule for resource sharing and the payment rule. The authors in~\cite{shi2014online} designed the online combinatorial auction for maximizing the social welfare of the VM market in cloud computing. Also,~\cite{hosseinalipour2021two} suggested a two-stage auction mechanism that captures the interaction among customers, cloud managers, and cloud providers.
The authors in~\cite{zheng2015bid, khodak2018learning} studied the bidding strategy of customers to minimize the cost of completing a job by predicting the spot price.
 The paper~\cite{song2020pricing} suggested pricing and bidding strategies for cloud providers and users of Spot instance services considering the delays for job completion.

Although our paper does not directly deal with the auction mechanism, it provides the guideline for Spot price, which is derived from the relationship between supply and demand of Spot instances when the CSPs set the Spot price to decide the Spot users based on their maximum bids. To figure out the providers' pricing strategy from game-theoretic modeling and equilibrium analysis is common in the communication network area~\cite{sen2010modeling, lee2018incentivizing, kim2019economics}. Similarly, the papers in~\cite{ma2012pricing, xu2013dynamic, alzhouri2017dynamic, alzhouri2021maximizing, li2021profits, di2013optimal, cardellini2016game} studied the pricing of the cloud market via game-theoretic analysis. In~\cite{ma2012pricing}, the best job submission strategy of customers was considered, where the dynamic changing spot price is modeled by a multi-stage game. In~\cite{xu2013dynamic}, the authors represented the dynamic pricing problem for maximizing revenue.
The papers~\cite{alzhouri2017dynamic, alzhouri2021maximizing} also investigated the dynamic pricing strategy by stochastic Markov decision process to maximize cloud revenue. In~\cite{li2021profits}, the author studied the competitive interaction between CSPs in a non-cooperative game theoretic model. The papers in~\cite{di2013optimal, cardellini2016game} analyzed the providers' strategies by a two-stage dynamic game between IaaS and SaaS providers.
 To the best of our knowledge, our paper is the first that studies the equilibrium analysis with a game-theoretic approach, which captures the resource utilization of both On-demand and Spot instances of CSPs with respect to the customers' service selection according to their heterogeneous willingness to pay on the QoS.

 \section{Future work}
We can improve the system model, in particular, customers’ heterogeneity in willingness to pay on the different QoS levels, which is currently a uniformly random value. Some customers may pay a premium price to get a higher level of QoS if their job is mission-critical, while others are more focused on cost-saving. Therefore, a deeper understanding of their jobs and required QoS levels would improve a future system model. 
Furthermore, our model can be upgraded by considering the dynamics of resource utilization according to the provisioning and eviction strategy of spot instances to avoid hindering a new regular instance provisioning. 



\section{Conclusion}\label{sec:conclusion}
In this paper, we formally show that the spot instance can deliver additional benefits for both CSPs and customers under some conditions through a two-stage dynamic game modeling and equilibrium analysis. The two-stage dynamic game reflects the complex interplay between CSPs and customers, where we model various heterogeneities in customers' willingness to pay, on-demand and spot instance services' different QoS, and resource utilization. We believe that our paper provides a clear insight into a pricing policy for the spot instance service for increasing CSPs' revenue.

\begin{appendices}

\section{Proof of Lemma~\ref{lem:condition}}\label{proof:lem_condition}
\begin{proof}

Recall that we already have the customers' behavior analysis at Stage II as shown in Lemma~\ref{lem:stage2}.
\begin{eqnarray}
\label{eq:stage2}
a^{\star} (\theta) = 
\begin{cases}
o, \quad \text{if} \quad \frac{p_o - p_s}{q_o - q_s} < \theta \leq 1,\\
s, \quad \text{if} \quad \frac{p_s}{q_s} < \theta \leq \frac{p_o - p_s}{q_o - q_s}, \\
n, \quad \text{otherwise.}
\end{cases}
\end{eqnarray}

Now, we start the equilibrium analysis at Stage I. CSP decides the on-demand price and the spot price, $p_o$ and $p_s,$ to maximize its revenue, i.e.,

\begin{align*}
\bm p^\star (t) &= \arg \max_{\bm p(t)} \pi(\bm p(t)),\\ 
&= \arg \max_{\bm p(t)} \sum_{m \in \set M} \big( \pi_o^m(p_o^m(t)) + \pi_s^m(p_s^m(t)) \big),
\end{align*}
whereby the definition in~\eqref{eq:def_revenue}, the total revenue of CSP at time slot $t$ under cluster $m$ can be written by:
\begin{eqnarray*}
\pi^m(p_o^m(t), p_s^m(t))
&=& p_o^m (t) \cdot \int_0^1 x_{\theta}(t) \cdot  \bm 1_{\{\theta \in \Theta_o^m(t)\}} d \theta \cr
&+& p_s^m (t) \cdot \int_0^1 x_{\theta}(t) \cdot  \bm 1_{\{\theta \in \Theta_s^m(t)\}} d \theta.
\end{eqnarray*}
Because all the time slots and clusters are mutually independent, i.e., the pricing policy in specific $t$ and $m$ do not affect other time slots and clusters, thus we can find the equilibrium for each of them independently. In other words, we have
$$
(p_o^{m, \star}(t) , p_s^{m, \star}(t)) = \arg \max_{(p_o^m(t), p_s^m(t))} \pi^m (p_o^m(t), p_s^m(t)),
$$
where we will omit $t$ and $m$ in notation for simplicity.
Then, thanks to Definition~\ref{def:average resource utilization}, we get:
\begin{eqnarray*}
\pi &=& p_o \int_0^1 x_{\theta} \cdot  \bm 1_{\{\theta \in \Theta_o\}} d \theta 
+ p_s \int_0^1 x_{\theta} \cdot  \bm 1_{\{\theta \in \Theta_s\}} d \theta,\cr
&=&  p_o \gamma_o |\Theta_o| + p_s \gamma_s |\Theta_s|, \cr
&=& \gamma_o p_o  \big(1 - \frac{p_o - p_s}{q_o - q_s} \big) 
+ \gamma_s  p_s \big(\frac{p_o - p_s}{q_o - q_s}  - \frac{p_s}{q_s}\big). 
\end{eqnarray*}
We now show that under $\tt C1$, $\pi(p_o, p_s)$ is a strictly concave function for the price $ p \triangleq (p_o, p_s),$ which can be shown by the Hessian of $\pi(p)$ is negative definite, for which Hessian has (i) negative trace and (ii) positive determinant. At first, the is straightforward that the Hessian of $\pi$ has the negative trace:
\begin{eqnarray*}
\frac{\partial^2 \pi}{\partial p_o^2} = \frac{-2\gamma_o}{q_o - q_s} < 0, \cr
\frac{\partial^2 \pi}{\partial p_s^2} = \frac{-2 \gamma_s}{q_o - q_s} - \frac{2 \gamma_s}{q_s} < 0.
\end{eqnarray*}
Moreover, the determinant of Hessian is rewritten by:
\begin{flalign}\label{eq:determinant}
& \frac{\partial^2 \pi}{\partial p_o^2} \!\cdot\! \frac{\partial^2 \pi}{\partial p_s^2} - \frac{\partial^2 \pi}{\partial p_o \partial p_s} \!\cdot \!\frac{\partial^2 \pi}{\partial p_s \partial p_o} \cr 
& = \frac{-2\gamma_o}{q_o - q_s} \!\cdot\! \Big( \frac{-2 \gamma_s}{q_o - q_s} - \frac{2 \gamma_s}{q_s} \Big) - \Big( \frac{\gamma_o + \gamma_s}{q_o - q_s} \Big )^2 \cr
& = \frac{4\gamma_o \gamma_s}{(q_o - q_s)^2} + \frac{4\gamma_o \gamma_s}{q_s (q_o - q_s)} - \frac{\eta^2}{(q_o - q_s)^2} \cr
& = 4 \gamma_o \gamma_s q_s + 4 \gamma_o \gamma_s (q_o - q_s) - q_s \eta^2 \cr
& = 4\gamma_o \gamma_s q_o  - \eta^2 q_s.
\end{flalign}
Then, the determinant in~\eqref{eq:determinant} is positive under $\tt C1$, because
$$
\gamma_o > \eta q_s / 2 q_o \text{ and } \gamma_s > \eta/2 \Rightarrow \gamma_o \gamma_s > \eta^2 q_s / 4 q_o.
$$
Thus, there exists a unique equilibrium price $p = (p_o^\star, p_s^\star)$ under the condition $\tt C1.$ 
\end{proof}

\section{Proof of Theorem~\ref{thm:equilibrium}}\label{proof:thm}
\begin{proof}
We now find the equilibrium of a two-stage dynamic game between CSP and customers, which is defined in Section~\ref{subsec:game}. Because we showed that the revenue of CSP is strictly concave for price $p = (p_o, p_s)$ in Lemma~\ref{lem:stage2}, thus the equilibrium can be found at the point where the first derivatives of $\pi$ with respect to $p_o$ and $p_s$ are equal to zero, i.e.,
\begin{eqnarray}
\frac{\partial \pi}{\partial p_o} = 0 \;\; \text{and} \;\; \frac{\partial \pi}{\partial p_s} = 0.
\end{eqnarray}
The first derivative of $\pi$ with respect to $p_o$ is:
\begin{eqnarray*}
\frac{\partial \pi}{\partial p_o} =
-\frac{\gamma_o p_o}{q_o - q_s} + \gamma_o (1 - \frac{p_o - p_s}{q_o - q_s}) + \frac{\gamma_s p_s}{q_o - q_s}.
\end{eqnarray*}
Then, the derivative becomes zero if the following holds:
\begin{eqnarray}\label{eq:p_s}
p_s = \frac{\gamma_o}{\gamma_o + \gamma_s} (2 p_o - q_o + q_s).
\end{eqnarray}
Similarly, the first derivative of $\pi$ with respect to $p_s$ is:
\begin{eqnarray*}
\frac{\partial \pi}{\partial p_s} =
\frac{\gamma_o p_o}{q_o - q_s} - (\frac{\gamma_s p_s}{q_o - q_s} + \frac{\gamma_s p_s}{q_s}) + \gamma_s (\frac{p_o - p_s}{q_o - q_s} - \frac{p_s}{q_s}).
\end{eqnarray*}
Then, the derivative becomes zero if the following holds:
\begin{eqnarray}\label{eq:p_o}
p_o = \frac{2\gamma_s q_o p_s}{(\gamma_o + \gamma_s) q_s}.
\end{eqnarray}
Then, we can get the equilibrium price $ p^\star = (p_o ^\star , p_s^\star)$ as shown in~\eqref{eq:eq_price}, which satisfies both equations~\eqref{eq:p_s} and~\eqref{eq:p_o} concurrently.

Finally, we can derive the equilibrium market share of each service~\eqref{eq:eq_marketshare} from the equation in~\eqref{eq:prop_marketshare}. Consequently, the definition~\eqref{eq:def_revenue} leads to the equilibrium revenue~\eqref{eq:eq_revenue}. 
\end{proof}

\section{Proof of Lemma~\ref{lem:on-demand-only}}\label{proof:lem:on-demand-only}

\begin{proof}
We derive the equilibrium price, market share, and revenue when CSP only serves the on-demand instance service. By the backward induction, we first find the customers' behavior at Stage II:
\begin{eqnarray}
a^\star (\theta) =
\begin{cases}
o, \quad \text{if} \quad \frac{p_o}{q_o} < \theta \leq 1,\\
n, \quad \text{otherwise.}
\end{cases}
\end{eqnarray}
This is because of the fact that the customers decide to choose the on-demand instance service if the following holds:
\begin{eqnarray*}
\theta q_o  - p_o > 0.
\end{eqnarray*}
Next, we provide the equilibrium analysis at Stage I. CSP chooses its price strategy $p_o$ so as to maximize its revenue, i.e.,
$$
p_o^\star = \arg \max_{p_o} \pi(p_o),
$$
where the revenue of on-demand service is derived as:
\begin{eqnarray*}
\pi(p_o) = \gamma_o p_o \int_{\Theta_o} d \theta
 = \gamma_o p_o (1-\frac{p_o}{q_o}).
\end{eqnarray*}
Then, there exists a unique $p_o$ maximizing $\pi$, because $\pi$ is strictly concave with respect to $p_o.$ 
Thus, we get the equilibrium price $\bar p_o^\star = q_o /2$, which satisfies $d\pi/dp_o = 0.$ We can easily derive the equilibrium market share and the equilibrium revenue.
\end{proof}

\section{Proof of Proposition~\ref{prop:monopoly}}\label{proof:prop:monopoly}

\begin{proof}
    Now, we show the properties in Proposition~\ref{prop:monopoly}.
\smallskip

\begin{compactenum}[(\em a)]
    \item The price property is rewritten as:
    \begin{flalign}\label{eq:property_price}
        p_o^\star > \bar p_o^\star &\Leftrightarrow \frac{2\gamma_o \gamma_s q_o (q_o - q_s)}{4\gamma_o \gamma_s q_o - \eta^2 q_s} > \frac{q_o}{2}, \cr
        &\Leftrightarrow 4 \gamma_o \gamma_s (q_o - q_s) > 4 \gamma_o \gamma_s q_o - \eta^2 q_s, \cr
        &\Leftrightarrow \gamma_o \gamma_s < \eta^2 / 4.
    \end{flalign}
    Then, the inequality~\eqref{eq:property_price} always holds, because we defined $\eta = \gamma_o + \gamma_s$, thus the maximum value of $\gamma_o \gamma_s = \eta^2/4,$ where $\gamma_o = \gamma_s = \eta /2.$
    \item The first market share property is rewritten as:
    \begin{flalign} \label{eq:property_marketshare}
    |\Theta_o^\star| < |\bar \Theta_o^\star| &\Leftrightarrow  \frac{\gamma_s(2\gamma_o q_o - \eta q_s)}{4 \gamma_o \gamma_s q_o - \eta^2 q_s} < 1/2, \cr
    &\Leftrightarrow 2 \gamma_s (2\gamma_o q_o - \eta q_s) < 4 \gamma_o \gamma_s q_o - \eta^2 q_s, \cr
    &\Leftrightarrow  \gamma_s > \eta /2.
    \end{flalign}
    The inequality~\eqref{eq:property_marketshare} is straightforward by the condition $\tt C1$. The second market share property is rewritten as:
    \begin{flalign}\label{eq:property_marketshare2}
    & |\Theta_o^\star| + |\Theta_s^\star| > |\bar \Theta_o^\star| \cr
    & \Leftrightarrow \frac{\gamma_s(2\gamma_o q_o - \eta q_s)}{4 \gamma_o \gamma_s q_o - \eta^2 q_s} + \frac{\gamma_o(\gamma_s - \gamma_o)q_o}{4\gamma_o \gamma_s q_o - \eta^2 q_s} > 1/2, \cr
    &\Leftrightarrow 6\gamma_o \gamma_s q_o - 2 \gamma_s \eta q_s - 2 \gamma_o^2 q_o > 4 \gamma_o \gamma_s q_o - \eta^2 q_s, \cr
    &\Leftrightarrow (\gamma_s - \gamma_o) (2 \gamma_o q_o - \eta q_s) > 0.
    \end{flalign}
    Then, we get that \eqref{eq:property_marketshare2} holds, because $\gamma_s - \gamma_o > 0$ and $2 \gamma_o q_o - \eta q_ s> 0$ from the condition $\tt C1$.
    \item The revenue property is rewritten as:
    \begin{flalign}\label{eq:property_revenue}
    \pi^\star > \bar \pi^\star &\Leftrightarrow  \frac{\gamma_o^2 \gamma_s q_o(q_o - q_s)}{4\gamma_o \gamma_s q_o - \eta^2 q_s} > \frac{\gamma_o q_o}{4}, \cr
    &\Leftrightarrow 4 \gamma_o \gamma_s (q_o - q_s) > 4 \gamma_o \gamma_s q_o - \eta^2 q_s, \cr
    &\Leftrightarrow \gamma_o \gamma_s < \eta^2 /4.
    \end{flalign}
        Then, \eqref{eq:property_revenue} always holds, because it is the same as~\eqref{eq:property_price}.
    
\end{compactenum}
\end{proof}

\section{Proof of Proposition~\ref{prop:customer}}\label{proof:prop:customer}
By definition of utility~\eqref{eq:utility}, the aggregate utility of the on-demand instance service is derived as:
\begin{eqnarray*}
    \int_{\Theta_o} u_i^\star(\theta)~ d \theta 
    &=& \int_{\Theta_o} \theta q_o x_{\theta} - p_o^\star x_{\theta} ~~ d\theta, \cr
    &=& \Big[ \gamma_o ({q_o\theta^2}/{2} - p_o^\star \theta) \Big]_{\Theta_o^\star}, \cr
    &=& q_o \gamma_o \gamma_s (2 \gamma_o q_o - \eta q_s) \cr
     &\times&  \Big(\frac{\gamma_o q_s(\gamma_s - \gamma_o) + 2 \gamma_o \gamma_s (q_o + q_s) - \eta^2 q_s}{2(4\gamma_o \gamma_s q_o - \eta^2 q_s)^2}\Big),
\end{eqnarray*}
where the equilibrium price and the equilibrium of market share's interval are described in Theorem~\ref{thm:equilibrium} and Table~\ref{table}, respectively.
Similarly, the aggregate utility of the spot instance service can be derived from the equilibrium analysis results.
\end{appendices}

\bibliographystyle{IEEEtran}
\bibliography{main} 

\begin{thebibliography}{10}
\providecommand{\url}[1]{#1}
\csname url@samestyle\endcsname
\providecommand{\newblock}{\relax}
\providecommand{\bibinfo}[2]{#2}
\providecommand{\BIBentrySTDinterwordspacing}{\spaceskip=0pt\relax}
\providecommand{\BIBentryALTinterwordstretchfactor}{4}
\providecommand{\BIBentryALTinterwordspacing}{\spaceskip=\fontdimen2\font plus
\BIBentryALTinterwordstretchfactor\fontdimen3\font minus
  \fontdimen4\font\relax}
\providecommand{\BIBforeignlanguage}[2]{{%
\expandafter\ifx\csname l@#1\endcsname\relax
\typeout{** WARNING: IEEEtran.bst: No hyphenation pattern has been}%
\typeout{** loaded for the language `#1'. Using the pattern for}%
\typeout{** the default language instead.}%
\else
\language=\csname l@#1\endcsname
\fi
#2}}
\providecommand{\BIBdecl}{\relax}
\BIBdecl

\bibitem{AWSSpot}
{Amazon AWS}, ``{Amazon EC2 Spot Instances},''
  \url{https://aws.amazon.com/ec2/spot/}, 2022.

\bibitem{AzureSpot}
{Microsoft Azure}, ``{Azure Spot Virtual Machines},''
  \url{https://azure.microsoft.com/en-us/services/virtual-machines/spot}, 2022.

\bibitem{IBMSpot}
{IBM Cloud}, ``{IBM Transient Virtual Servers},''
  \url{https://cloud.ibm.com/docs/virtual-servers?topic=virtual-servers-about-vs-transient},
  2022.

\bibitem{GCPSpot}
{Google Cloud}, ``{Spot VMs},'' \url{https://cloud.google.com/spot-vms}, 2022.

\bibitem{arrow1954existence}
K.~J. Arrow and G.~Debreu, ``Existence of an equilibrium for a competitive
  economy,'' \emph{Econometrica: Journal of the Econometric Society}, pp.
  265--290, 1954.

\bibitem{von2007theory}
J.~Von~Neumann and O.~Morgenstern, ``Theory of games and economic behavior,''
  in \emph{Theory of games and economic behavior}.\hskip 1em plus 0.5em minus
  0.4em\relax Princeton university press, 2007.

\bibitem{nash1951non}
J.~Nash, ``Non-cooperative games,'' \emph{Annals of mathematics}, pp. 286--295,
  1951.

\bibitem{refAdoption}
L.~M. Cabral, ``On the adoption of innovations with
  ‘network’externalities,'' \emph{Mathematical Social Sciences}, vol.~19,
  no.~3, pp. 299--308, 1990.

\bibitem{vanderbei2020linear}
R.~J. Vanderbei \emph{et~al.}, \emph{Linear programming}.\hskip 1em plus 0.5em
  minus 0.4em\relax Springer, 2020.

\bibitem{krishna2009auction}
V.~Krishna, \emph{Auction theory}.\hskip 1em plus 0.5em minus 0.4em\relax
  Academic press, 2009.

\bibitem{ORTools}
{Google Developers}, ``{Google OR-Tools},''
  \url{https://developers.google.com/optimization/}, 2022.

\bibitem{AWSSpotArticle}
{Amazon AWS}, ``{New Amazon EC2 Spot pricing model: Simplified purchasing
  without bidding and fewer interruptions},''
  \url{https://aws.amazon.com/blogs/compute/new-amazon-ec2-spot-pricing/},
  2018.

\bibitem{agmon2013deconstructing}
O.~Agmon Ben-Yehuda, M.~Ben-Yehuda, A.~Schuster, and D.~Tsafrir,
  ``Deconstructing amazon ec2 spot instance pricing,'' \emph{ACM Transactions
  on Economics and Computation (TEAC)}, vol.~1, no.~3, pp. 1--20, 2013.

\bibitem{luong2017resource}
N.~C. Luong, P.~Wang, D.~Niyato, Y.~Wen, and Z.~Han, ``Resource management in
  cloud networking using economic analysis and pricing models: A survey,''
  \emph{IEEE Communications Surveys \& Tutorials}, vol.~19, no.~2, pp.
  954--1001, 2017.

\bibitem{kumar2018survey}
D.~Kumar, G.~Baranwal, Z.~Raza, and D.~P. Vidyarthi, ``A survey on spot pricing
  in cloud computing,'' \emph{Journal of Network and Systems Management},
  vol.~26, no.~4, pp. 809--856, 2018.

\bibitem{cong2020survey}
P.~Cong, G.~Xu, T.~Wei, and K.~Li, ``A survey of profit optimization techniques
  for cloud providers,'' \emph{ACM Computing Surveys (CSUR)}, vol.~53, no.~2,
  pp. 1--35, 2020.

\bibitem{toosi2016auction}
A.~N. Toosi, K.~Vanmechelen, F.~Khodadadi, and R.~Buyya, ``An auction mechanism
  for cloud spot markets,'' \emph{ACM Transactions on Autonomous and Adaptive
  Systems (TAAS)}, vol.~11, no.~1, pp. 1--33, 2016.

\bibitem{li2018online}
J.~Li, Y.~Zhu, J.~Yu, C.~Long, G.~Xue, and S.~Qian, ``Online auction for iaas
  clouds: Towards elastic user demands and weighted heterogeneous vms,''
  \emph{IEEE Transactions on Parallel and Distributed Systems}, vol.~29, no.~9,
  pp. 2075--2089, 2018.

\bibitem{shi2014online}
W.~Shi, L.~Zhang, C.~Wu, Z.~Li, and F.~C. Lau, ``An online auction framework
  for dynamic resource provisioning in cloud computing,'' \emph{ACM SIGMETRICS
  Performance Evaluation Review}, vol.~42, no.~1, pp. 71--83, 2014.

\bibitem{hosseinalipour2021two}
S.~Hosseinalipour and H.~Dai, ``A two-stage auction mechanism for cloud
  resource allocation,'' \emph{IEEE Transactions on Cloud Computing}, vol.~9,
  no.~03, pp. 881--895, 2021.

\bibitem{zheng2015bid}
L.~Zheng, C.~Joe-Wong, C.~W. Tan, M.~Chiang, and X.~Wang, ``How to bid the
  cloud,'' in \emph{2015 ACM Conference on Special Interest Group on Data
  Communication (SIGCOMM 2015)}.\hskip 1em plus 0.5em minus 0.4em\relax
  Association for Computing Machinery, Inc, 2015, pp. 71--84.

\bibitem{khodak2018learning}
M.~Khodak, L.~Zheng, A.~S. Lan, C.~Joe-Wong, and M.~Chiang, ``Learning cloud
  dynamics to optimize spot instance bidding strategies,'' in \emph{Proc. of
  {IEEE} {INFOCOM}}, 2018.

\bibitem{song2020pricing}
J.~Song and R.~Gu{\'e}rin, ``Pricing (and bidding) strategies for delay
  differentiated cloud services,'' \emph{ACM Transactions on Economics and
  Computation (TEAC)}, vol.~8, no.~2, pp. 1--58, 2020.

\bibitem{sen2010modeling}
S.~Sen, Y.~Jin, R.~Gu{\'e}rin, and K.~Hosanagar, ``Modeling the dynamics of
  network technology adoption and the role of converters,'' \emph{IEEE/ACM
  Transactions on Networking}, vol.~18, no.~6, pp. 1793--1805, 2010.

\bibitem{lee2018incentivizing}
H.~Lee, J.~Bang, and Y.~Yi, ``Incentivizing hosts via multilateral cooperation
  in user-provided networks: A fluid {Shapley} value approach,'' in \emph{Proc.
  of {ACM} {MobiHoc}}, 2018.

\bibitem{kim2019economics}
D.~Kim, H.~Lee, H.~Song, N.~Choi, and Y.~Yi, ``Economics of fog computing:
  Interplay among infrastructure and service providers, users, and edge
  resource owners,'' \emph{IEEE Transactions on Mobile Computing}, 2019.

\bibitem{ma2012pricing}
D.~Ma and J.~Huang, ``The pricing model of cloud computing services,'' in
  \emph{Proceedings of the 14th Annual International Conference on Electronic
  Commerce}, 2012, pp. 263--269.

\bibitem{xu2013dynamic}
H.~Xu and B.~Li, ``Dynamic cloud pricing for revenue maximization,'' \emph{IEEE
  Transactions on Cloud Computing}, vol.~1, no.~2, pp. 158--171, 2013.

\bibitem{alzhouri2017dynamic}
F.~Alzhouri, A.~Agarwal, Y.~Liu, and A.~S. Bataineh, ``Dynamic pricing for
  maximizing cloud revenue: a column generation approach,'' in
  \emph{Proceedings of the 18th International Conference on Distributed
  Computing and Networking}, 2017, pp. 1--9.

\bibitem{alzhouri2021maximizing}
F.~Alzhouri, A.~Agarwal, and Y.~Liu, ``Maximizing cloud revenue using dynamic
  pricing of multiple class virtual machines,'' \emph{IEEE Transactions on
  Cloud Computing}, vol.~9, no.~02, pp. 682--695, 2021.

\bibitem{li2021profits}
K.~Li, ``On the profits of competing cloud service providers: A game theoretic
  approach,'' \emph{Journal of Computer and System Sciences}, vol. 117, pp.
  130--153, 2021.

\bibitem{di2013optimal}
V.~Di~Valerio, V.~Cardellini, and F.~L. Presti, ``Optimal pricing and service
  provisioning strategies in cloud systems: a stackelberg game approach,'' in
  \emph{2013 IEEE Sixth International Conference on Cloud Computing}.\hskip 1em
  plus 0.5em minus 0.4em\relax IEEE, 2013, pp. 115--122.

\bibitem{cardellini2016game}
V.~Cardellini, V.~Di~Valerio, and F.~L. Presti, ``Game-theoretic resource
  pricing and provisioning strategies in cloud systems,'' \emph{IEEE
  Transactions on Services Computing}, vol.~13, no.~1, pp. 86--98, 2016.

\end{thebibliography}

\end{document}